\documentclass[11pt,reqno]{amsart}

\usepackage{amsmath,amssymb,amsthm,amsfonts,latexsym}
\usepackage{stmaryrd}

\usepackage{mathrsfs}

\usepackage{relsize}
\usepackage{comment}

\usepackage{graphicx,subfigure}

\usepackage[square,numbers]{natbib}




\numberwithin{equation}{section}



\usepackage[colorlinks=true,urlcolor=blue,
citecolor=red,linkcolor=blue,linktocpage,pdfpagelabels,
bookmarksnumbered,bookmarksopen,backref=page]{hyperref}


\usepackage{fullpage}

\usepackage{ifthen} 

\provideboolean{shownotes} 
\setboolean{shownotes}{true} 
\newcommand{\margnote}[1]{
\ifthenelse{\boolean{shownotes}}%
{\marginpar{\raggedright\tiny\texttt{#1}}}%
{}%
}

\newcommand{\hole}[1]{
\ifthenelse{\boolean{shownotes}}%
{\begin{center} \fbox{ \rule {.25cm}{0cm}
\rule[-.1cm]{0cm}{.4cm} \parbox{.85\textwidth}{\begin{center}
\texttt{#1}\end{center}} \rule {.25cm}{0cm}}\end{center}}
{}
}


\newcommand{\zer}{\mathbf{0}}
\newcommand{\R}{{\mathbb R}}
\newcommand{\M}{{\mathbb M}}
\newcommand{\Z}{{\mathbb Z}}
\newcommand{\C}{{\mathbb C}}

\newcommand{\Id}{{\mathbb I}_d}

\newcommand{\Idc}{{\mathbb I}_5}
\newcommand{\Idfo}{{\mathbb I}_4}


\newcommand{\cA}{{\mathcal{A}}}

\newcommand{\cR}{{\mathcal{R}}}

\newcommand{\cS}{{\mathcal{S}}}

\newcommand{\wun}{\mathbf{\hat{w}_1}}
\newcommand{\wdo}{\mathbf{\hat{w}_2}}






\newcommand{\ii}{\mathrm{i}}

\renewcommand{\Re}{\mbox{\rm Re}\,}
\renewcommand{\Im}{\mbox{\rm Im}\,}

 
\theoremstyle{plain}

\newtheorem{lemma}{Lemma}[section]

\newtheorem{theo}[lemma]{Theorem}

\theoremstyle{definition}

\newtheorem{remark}[lemma]{Remark}

\theoremstyle{remark}




\usepackage[foot]{amsaddr} 

\title{The secular equation for elastic surface waves under non standard boundary conditions of impedance type: A perspective from linear algebra}

\author[F. Vallejo]{*Fabio Andres Vallejo Narvaez}
 
\address{*Instituto de Investigaciones en Matem\'aticas Aplicadas y en Sistemas\\Universidad Nacional 
Aut\'onoma de M\'exico\\Circuito Escolar s/n, Ciudad Universitaria, Ciudad de M\'{e}xico C.P. 04510 (Mexico).}

\email{fabioval@ciencias.unam.mx}


\subjclass[2020]{35Q74, 74B05, 74J20, 74J40, 35L05}
\keywords{Rayleigh waves, impedance boundary conditions,  secular equation, hyperbolic  systems}

\begin{document}

\maketitle

\begin{abstract}
The study of elastic surface waves under impedance boundary conditions has become an intensive field of research due to their potential  to model a wide range of problems. However, even when the secular equation, which provides the speed of the surface wave,  can be explicitly derived, the analysis is limited to specific cases due to its cumbersome final expression. In this work, we present an alternative method  based on linear algebra tools, to deal with the secular equation for surface waves in an isotropic elastic half-space subjected to non-standard boundary conditions of impedance type. They are defined by proportional relationships between both the stress and velocity components at the surface, with complex proportional ratios. Our analysis shows that the associated secular equation does not vanish in the upper complex half-plane including the real axis.  Interestingly, the full impedance boundary conditions proposed by Godoy et al. [Wave Motion 49 (2012), 585-594] arise as a particular limit case. 
 An approximation technique is introduced, in order to extend the analysis from the original problem to Godoy's impedance boundary conditions. As a result, it is  shows that the secular equation with full Godoy's impedance boundary condition does not vanish outside the real axis. This is a crucial property for the well-posedness of the boundary value problem of partial differential equations, and thus crucial for the model to explain surface wave propagation. Due to the cumbersome secular equation, this property has been verified only for particular cases of the impedance boundary condition, namely the stress-free boundary condition (zero impedance) and when either one of the impedance parameter is set to zero (normal and tangential impedance cases).

\end{abstract}


\setcounter{tocdepth}{1}
\tableofcontents


\section{Introduction}

Surface waves and their applications have been a central topic in a wide range of scientific fields, such as acoustics, the telecommunications industry,  material science,  and  notably in seismology, due to their potential to explain most of the damage and destruction during an earthquake. The best known surface waves are Rayleigh waves, which propagates along the free surface of an elastic medium, with an amplitude that decreases  exponentially  with the depth. The simplest  Rayleigh wave occurs along the surface of a homogeneous isotropic half-space subjected to the classical stress-free boundary condition and it was first described by Lord Rayleigh in his seminal work \cite{Raylei} from 1885. This work was the beginning of a lot of investigations on Rayleigh wave propagation on general anisotropic elastic half-space, where the  stres-free boundary condition constitutes the main paradigm (see, e.g., \cite{Raylei,Achen75,NakaG91,BarLoth1985,PhamOg04,Ting2002}). \\

Recently, there has been an increasing interest in surface wave propagation along elastic solids subjected to non-standard boundary condition of impedance type, which are defined by linear relations between the unknown function and its derivatives. 
Although they are commonly used in electromagnetism \cite{senior1960,STUP2011,Owen13} and acoustics \cite{Antipov2002,Zakharov2006,YlOijala2006,Qin2012}, they also have proven to be effective in modeling specific problems in linear elasticity. For instance,  in the study of surface wave propagation on a half-space coated by a tiny layer on the surface,  
 the impedance boundary conditions can be used to simulate the effects of the tiny layer without directly considering the layer itself (see \cite{Tier1,bovik1,Vinh2012,Vinh2014ah,Dai1}).  In seismology, Malischewsky showed the potential of impedance boundary condition to model seismic wave propagation along discontinuities \cite{Masky1,Masky2}. They can also be used to describe interfaces between solids  under certain conditions  (e.g. \cite{Murty1,Diaz2015,Durn2000}). In the mathematical framework of hyperbolic partial differential equations defined on the half-space, impedance boundary conditions are interesting because, in contrast to the usual stress-free boundary condition, they might lead a wide range of boundary value problems of hyperbolic partial differential equations (PDEs), ranging from  well-posed problems, for which the existence of a unique solution is guaranteed, to PDE problems with Hadamard inestabilities, where the existence or uniqueness of a solution fail to hold. See, for instance, \cite{BS,Ser5,BRSZ,serr2} for the linear case and \cite{RamFa1,Pl2,FrP1} for application to non-linear problems, and the references therein. In this context, problems involving surface wave propagation emerge as transition problems. Surface waves propagating in presence of impedance boundary conditions are clearly of practical as well as theoretical interest.\\

The central problem in the study of surface waves under non-standard boundary conditions is determining their existence and uniqueness, better known as the surface wave analysis. For  general anisotropy elastic half-spaces subjected to the standard stress-free boundary condition, the surface wave analysis can be performed  by means of several methods such as the polarization vector method, matrix impedance and the so-called Stroh formalism (see \cite{Lothe1,MalAji1,Ting1,BarLoth1985} and in the references therein). 
However, as noted by Giang and Vinh \cite{Pham21}, some of these methods cannot be  directly extended to the case of impedance boundary conditions. In such cases, the classical approach is employed in which the existence of a surface  wave is guaranteed by the existence of a unique real zero (in the subsonic range) of the secular equation. This is a non-linear algebraic equation that results impractical to solve analytically, even for the simplest configuration, namely isotropic solid with stress-free boundary condition (for an abridged list of references, see \cite{Godoy1,HaRiv62,Rahman1995,Masky2000,Masky04,VinhOg2004,Rahman06,Li2006}). This real root of the secular equation  is precisely the speed of the surface wave.\\ 

When general impedance boundary conditions are considered, the analysis of the associated secular equation may be quite challenging. Malischewsky \cite{Masky1} provided a reduced form of the secular equation for Tiersten's impedance boundary conditions, in terms of impedance parameters depending on the frequency and the material constants. In order to obtain general results about the existence of surface waves, Godoy et al. \cite{Godoy1} assumed  that the impedance parameters are proportional to the frequency (see Equation \eqref{imped121}). Under this assumption, the secular equation becomes independent of the frequency. In other words, the  impedance boundary conditions proposed by Godoy et al.  generalize the stress-free boundary condition in the non-dispersive regime. 
 However, due to the cumbersome  secular equation, Godoy et al. \cite{Godoy1} restrict themselves to the case of tangential impedance boundary condition (see Equation \eqref{imped121} with $Z_1\in\R$, $Z_2=0$) and prove that a surface wave is always possible for arbitrary values of the  tangential impedance parameter ($Z_1$). In a further work, via ingenious algebro-analytical manipulations based on Cauchy integrals from complex analysis,  Vinh and Nguyen \cite{Vinh17} were able to derive an exact analytical formula for the phase velocity of the surface wave described by Godoy et al. \cite{Godoy1}. These waves are often termed ``Rayleigh waves with impedance boundary conditions''. Recently, Giang and  Vinh \cite{Pham21} followed the same approach to study the case of normal impedance boundary condition (see Equation \eqref{imped121} with $Z_1=0$, $Z_2\in\R$). The authors found that, in contrast to the tangential case \cite{Godoy1}, there exist values of the parameters (Lam\'e constants and $Z_2$) for which surface waves are not possible. Conversely, for all parameter values  for which the existence and uniqueness of the surface wave is guaranteed, an exact formula for its velocity was provided.  However, as pointed out in  \cite{Pham21}, the case of normal impedance boundary condition demanded more technical details compared to  its tangential counterpart when applying the complex function method. This suggest that this method might be hard to apply to the general case with both tangential and normal impedance parameters non-zero ($Z_1,Z_2\in\R$). As far as we know, the existence of surface waves in the general case is still an open problem. Alternative methods are therefore highly desirable to analyze the secular equation in more intricate scenarios.\\

Although in this work the existence of surface waves of Rayleigh type is not stablished, we concentrate in another essential feature: the behavior of the secular equation off the real axis in the complex plane. In the mathematical theory of hyperbolic PDEs, to investigate the well-posedness property, the linear equations of elastodynamics are frequently  written as a first order hyperbolic system of PDEs, where the boundary conditions of impedance type take the form of linear relations among the components of the unknown vector. In this framework, it is known that roots of the secular equation along the upper complex half-plane  lead to Hadamard instabilities of the associated boundary value problem of PDEs (see \cite{BS,Ser5,serr2}). This means that the existence and /or uniqueness of a solution for the PDE problem fail to hold, making the surface wave analysis  meaningless \cite{BS}. This fact it is not completely unknown in the context of linear elasticity theory.  Hayes and Rivlin \cite{HaRiv62} showed that complex roots of the stress-free secular equation (if they exist) are associated with physically inadmissible displacement fields. Achenbach \cite{Achen75}, via the principle argument from complex analysis, proved that for appropriate Lam\'e constants (see Equation \eqref{lame44}), the secular equation for the strees-free case (see Equation \eqref{finsec12} with $\gamma_1=\gamma_2=0$) does not vanish outside the real axis. 
 Obviously, the use of the argument principle to deal with the secular equation with full impedance boundary conditions investigated in \cite{Godoy1,Vinh17,Pham21}  is impractical due to its cumbersome final expression. For the particular cases of tangential and normal impedance boundary conditions, the absence of complex roots off the real axis trivially follows from the analysis performed in the aforementioned works \cite{Vinh17,Pham21}.   
\\

The main purpouse of this paper is to prove that the secular equation with full impedance boundary conditions (proposed by Godoy et al.) does not have complex roots off the real axis for arbitrary real impedance parameters ($Z_1,Z_2\in\R$). To do this, we present an alternative approach based on linear algebraic tools that proves to be useful for studying the secular equation off the real axis and even along the entire complex plane in certain cases.
  This approach takes advantage  of the fact that through a suitable change of variables, the elastodynamic equations for a compressible isotropic half-space  can be written as a first-order symmetric linear hyperbolic system of PDEs. Section \S \ref{syme45} is devoted to this purpose.  In section  \S \ref{imped44}, we introduce a perturbed boundary condition which prescribes both the stress and velocity components at the surface to be proportional (see Equation \eqref{bouny0}), where the proportional ratios are complex-valued  constants. The term "perturbed" attached to the boundary condition refers to the fact that when the real part of the complex valued ratios vanish, the full impedance boundary condition proposed by Godoy et al. \cite{Godoy1},  is retrieved. In this section, we also compute the associated secular equation. In section \S \ref{rayanaly} we apply the alternative method based on quadratic forms  and use a  result from the well-posedness theory of hyperbolic PDEs to show that the secular equation associated with the perturbed problem does not vanish neither the real axis nor along the upper complex half-plane (indicating the absence of Hadamard instabilities). 
 In section \S \ref{imped678}, we take advantage of this result to demonstrate the non-vanishing property off the real axis for the secular equation with full impedance boundary condition. That is, the property that Achenbach \cite{Achen75} proved for the stress-free boundary condition, remains valid for the general impedance boundary conditions proposed by Godoy et al. \cite{Godoy1}. 

\subsection*{Notation}
In this paper, lowercase bold letters denote column vectors and uppercase bold letters denote matrices. Given a matrix $\mathbf{M}$ (or a vector $\mathbf{v}$), its conjugate transpose is denoted by $\mathbf{M}^*=\overline{\mathbf{M}}^{\top}$.  We equip $\C^n$ with the Hermitian scalar product
$$\mathbf{v}^{*}\mathbf{w}=\sum\limits_{j=1}^{n}\overline{v}_j w_j,$$
where $v_j$ is the $j$-th component of $\mathbf{v}$. This scalar product leads to the norm:
$$\|\mathbf{v}\|=\sqrt{\mathbf{v}^*\mathbf{v}}.$$
The $d$-dimensional indentity matrix is written $\Id$ .

\section{Equations of motion}
\label{syme45}

Let us consider an isotropic elastic half-space with constant mass density $\rho$ occupying the domain $\{x_2\geq0\}$. We shall study planar motion in the $(x_1,x_2)$-plane, the displacement being independent of $x_3$. The components of the displacement satisfy
\[
u_j=u_j(x_1,x_2,t),\;\text{for}\;j=1,2,\;\text{and}\;u_3\equiv 0.
\] 

Thus, the constitutive isotropic equations characterized by  the symmetric stress tensor $\sigma$  has four relevant components related to the displacement gradients by
\begin{equation}\label{streslin}
\begin{aligned}
\sigma_{11}&=(\lambda+2\mu)u_{1,1}+\lambda u_{2,2},\\
\sigma_{12}=\sigma_{21}&=\mu\big(u_{1,2}+u_{2,1}\big),\\
\sigma_{22}&=(\lambda+2\mu)u_{2,2}+\lambda u_{1,1},
\end{aligned}
\end{equation}
where commas denotes differentiation with respect to spatial variables $x_i$ and  $\mu,\lambda$ are the standard  Lam\'e constants satisfying
\begin{equation}\label{lame44}
\mu>0,\quad \lambda+\mu>0.
\end{equation}
In terms of the Young's modulus $E$ and the Poisson's ratio $\nu$, \eqref{lame44} is equivalent to $E>0$ and $-1<\nu<0.5$ (see \cite{Achen75}). In absence of source terms, the equations of motion are of the form:
\begin{equation}\label{stssq1}
\begin{split}
\sigma_{11,1}+\sigma_{12,2}=\rho\ddot{u}_1,\\
\sigma_{12,1}+\sigma_{22,2}=\rho\ddot{u}_2.
\end{split}
\end{equation}
Adopting the velocity components $v_1=\dot{u}_1$, $v_2=\dot{u}_2$ and the stress components $\sigma_{11},\sigma_{12},\sigma_{22}$ as the dependent variables, the equations of motion \eqref{streslin}-\eqref{stssq1} can be written as a first order linear system of PDE which is known as the velocity-stress formulation (see \cite{viri86}). Indeed, by taking the time derivative of the stress components in \eqref{streslin} and applying the  Clairaut's theorem for mixed second order time and spatial derivatives, we can write the equation of motion \eqref{streslin}-\eqref{stssq1} in the form (\cite{viri86}):  
\begin{equation}\label{sys2or0}
\left\{\begin{split}
\rho\dfrac{\partial v_1}{\partial t}&=\dfrac{\partial \sigma_{11}}{\partial x_1}+\dfrac{\partial \sigma_{12}}{\partial x_2},\\
\rho\dfrac{\partial v_2}{\partial t}&=\dfrac{\partial \sigma_{12}}{\partial x_1}+\dfrac{\partial \sigma_{22}}{\partial x_2},\\
\dfrac{\partial \sigma_{11}}{\partial t}&=(\lambda+2\mu)\dfrac{\partial v_1}{\partial x_1}+\lambda\dfrac{\partial v_2}{\partial x_2},\\
\dfrac{\partial \sigma_{12}}{\partial t}&=\mu\Big(\dfrac{\partial v_1}{\partial x_2}+\dfrac{\partial v_2}{\partial x_1}\Big),\\
\dfrac{\partial \sigma_{22}}{\partial t}&=\lambda\dfrac{\partial v_1}{\partial x_1}+(\lambda+2\mu)\dfrac{\partial v_2}{\partial x_2}.\\
\end{split}\right. 
\end{equation}
In terms of the vector variable  $\mathbf{y}=(v_1,v_2,\sigma_{11},\sigma_{12},\sigma_{22})^\top$,  \eqref{sys2or0}  can be written in a compact  form as
\begin{equation}\label{sys2or00}
\dfrac{\partial \mathbf{y}}{\partial t}=\mathbf{A}_1\dfrac{\partial \mathbf{y}}{\partial x_1}+\mathbf{A}_2\dfrac{\partial \mathbf{y}}{\partial x_2}
\end{equation}
where $\mathbf{A}_1$, $\mathbf{A}_2$ are the constant matrices 
$$\mathbf{A}_1=\begin{pmatrix} 0 & 0 & 1/\rho & 0 &0 \\  0 & 0 & 0 & 1/\rho & 0 \\ \lambda+2\mu & 0 & 0 & 0 & 0 \\  0 & \mu & 0 & 0 & 0\\ \lambda & 0 & 0 & 0 & 0\end{pmatrix},\:\: \mathbf{A}_2=\begin{pmatrix} 0 & 0 & 0 & 1/\rho & 0  \\ 0 &  0 & 0 & 0 & 1/\rho  \\ 0 & \lambda & 0 & 0 & 0 \\  \mu & 0 & 0 & 0 & 0 \\ 0 & \lambda+2\mu  & 0 & 0 & 0 \end{pmatrix}.$$
This form of the isotropic elastodynamic equations is known as the velocity stress formulation and is frequently  used in seismology to implement  numerical methods (see, for instance, \cite{Jian99,viri86}). 
Our main results are based on the fact that the system \eqref{sys2or00} admits a symmetric respresentation  after a suitable change of variables (see \cite{Morando05,BS}).
Indeed,  change the variable $\mathbf{y}$ by the variable $\mathbf{w}=(w_1,w_2,w_3,w_4,w_5)^{\top}$ that is related to the former by $\mathbf{y}=\mathbf{C}\mathbf{w}$, where $\mathbf{C}$ is the constant  invertible matrix
$$\mathbf{C}=\begin{pmatrix} 0 & 0 & 0 &  \frac{1}{c_1} & 0  \\  0 & 0 & 0 & 0 & \frac{1}{c_1} \\ \frac{2\rho c_2\sqrt{c_1^2-c_2^2}}{c_1^2} & 0 &\rho\Big(1-\frac{2c_2^2}{c_1^2}\Big)  & 0 & 0 \\  0 & \frac{c_2\rho}{c_1} & 0 & 0 & 0\\ 0 & 0 & \rho & 0 & 0\end{pmatrix}$$
and $c_1:=\sqrt{(\lambda+2\mu)/\rho},\;c_2:=\sqrt{\mu/\rho}$ are the speed of bulk waves (preassure and shear, respectively). Substituting $\mathbf{y}=\mathbf{C}\mathbf{w}$ into \eqref{sys2or00} and simplifying gives

\begin{equation}\label{sys2or}
\dfrac{\partial \mathbf{w}}{\partial t}=\cS_1\dfrac{\partial \mathbf{w}}{\partial x_1}+\cS_2\dfrac{\partial \mathbf{w}}{\partial x_2},
\end{equation}
where $\cS_1=\mathbf{C}^{-1}\mathbf{A}_1\mathbf{C}$ and $\cS_2=\mathbf{C}^{-1}\mathbf{A}_2\mathbf{C}$ are the symmetric matrices given by
\begin{equation}\label{symatri}
\cS_1=\begin{pmatrix} 0 & 0 & 0 & \frac{2 c_2\sqrt{c_1^2-c_2^2}}{c_1} &0 \\  0 & 0 & 0 & 0 & c_2 \\ 0 & 0 & 0 & \frac{c_1^2-2c_2^2}{c_1} & 0 \\  \frac{2 c_2\sqrt{c_1^2-c_2^2}}{c_1} & 0 & \frac{c_1^2-2c_2^2}{c_1} & 0 & 0\\ 0 & c_2 & 0 & 0 & 0\end{pmatrix},\:\: \cS_2=\begin{pmatrix} 0 & 0 & 0 & 0 & 0  \\ 0 &  0 & 0 & c_2 & 0  \\ 0 & 0 & 0 & 0 & c_1 \\  0 & c_2 & 0 & 0 & 0 \\ 0 & 0  & c_1 & 0 & 0 \end{pmatrix}.
\end{equation}
Equation \eqref{sys2or} is refered as the symmetric first order version of the isotropic elastodynamic equations \eqref{stssq1}.

\section{Boundary conditions of impedance type and the secular equation}\label{imped44}
\subsection{Perturbed impedance boundary condition}
We assume that the surface $\{x_2=0\}$ is subjected to a boundary condition of the form:
\begin{equation}\label{bouny0}
\begin{aligned}
\sigma_{12}&+\gamma_1\dot{u}_1=0,\\
\sigma_{22}&+\gamma_2\dot{u}_2=0,
\end{aligned}\:\:\text{for}\:\: x_2=0,
\end{equation}
where $\gamma_1,\gamma_2$ are  complex constants given by
$$
\gamma_1=\varepsilon_1+\ii Z_1,\quad\gamma_2=\varepsilon_2+\ii Z_2.
$$
$Z_1,Z_2\in\R$ are the impedance parameters whose dimension is of stress/velocity \cite{Godoy1,Masky1} and $\varepsilon_1,\varepsilon_2$ are assumed to be negative reals (or $\Re\gamma_j<0$, $j=1,2$). We claim \eqref{bouny0} is a perturbed version of the full impedance boundary condition proposed by Godoy et al. \cite{Godoy1}. Indeed, assume that $\varepsilon_1,\varepsilon_2$ go to zero and that the displacement vector in \eqref{bouny0} depends harmonically on time through $e^{-\ii\omega t}$, that is $u_j=e^{-\ii\omega t}\hat{u}(x_1,x_2)$, $j=1,2$ (cf. \cite{Godoy1}). Substituting at \eqref{bouny0} gives the full impedance boundary condition (see, Equation (9) from \cite{Pham21})
\begin{equation}\label{imped121}
\left\{\begin{matrix}
\hat{\sigma}_{12}+\omega Z_1\hat{u}_1=0,\\
\hat{\sigma}_{22}+\omega Z_2\hat{u}_2=0,
\end{matrix}\right.  \:\:\: x_2=0.
\end{equation}
These impedance boundary conditions are tantamount to the Malischewsky ones if the  impedance parameters defined by him (see Equation 2 in \cite{Masky2}) are taken as:  $\xi_1=\omega Z_1$, $\xi_2=\omega Z_2$.  It is clear that setting $Z_1=Z_2=0$ leads to the classical stress-free boundary condition. When $Z_2=0$, we retrieve the tangential boundary condition investigated in \cite{Godoy1,Vinh17} and when $Z_1=0$ we obtain the normal impedance boundary condition studied  in \cite{Pham21}.



We are going to derive the secular equation for surface waves associated to the perturbed boundary condition (PBC) \eqref{bouny0} and analyze it in detail by a linear algebra approach.  Then,  we let $\varepsilon_1,\varepsilon_2$ tend to zero to obtain the secular equation with impedance boundary conditions. Finally, we analyze it in light of the behavior of its counterpart with PBD.


For the analysis to come, the symmetric first order version of the elastodynamic equation shall be fundamental, so we express the boundary condition \eqref{bouny0} in terms of the components of the vector $\mathbf{w}$. That is:
\begin{equation}\label{bouny}
\begin{aligned}
\rho c_2 w_2&+\gamma_1 w_4=0,\\
 \rho c_1 w_3&+\gamma_2 w_5=0,
\end{aligned}\:\:\text{for}\:\: x_2=0,
\end{equation}
which, by defining the column vector $\mathbf{w'}:=(w_2,w_3,w_4,w_5)^{\top}\in\C^4$, can be written in matrix form as:
\begin{equation}
\label{prefacto}
\begin{pmatrix}  c_2\rho & 0 & \gamma_1 & 0\\ 0 & c_1\rho & 0 &\gamma_2 \end{pmatrix}\mathbf{w'} =\begin{pmatrix} 0 \\ 0 \end{pmatrix},\:\: x_2=0.
\end{equation}

\subsection{Secular equation}
We derive in this section the secular equation for surface waves with PBC by using the first order symmetric version \eqref{sys2or}  of the elastodynamic equations and the matrix form of the boundary condition \eqref{prefacto}. The procedure is  the same as in  \cite{Achen75}, a surface wave of Rayleigh type propagating in the $x_1$-direction with velocity $c$,  and wave number $k>0$ has displacement vector given by
\begin{equation}\label{helch99}
\mathbf{w}=\textit{\LARGE e}^{-kb x_2}\textit{\LARGE e}^{k\ii(x_1-ct)}\mathbf{\hat{w}}
\end{equation}
where the constant vector $\mathbf{\hat{w}}\in\C^5$,  the velocity $c$ and the unknown $b$ have to be chosen  such that \eqref{helch99} satisfies both the differential equation \eqref{sys2or} and the boundary condition \eqref{bouny}. Moreover, the unknown $b$ must be chosen with positive real part in order to fulfill the decaying condition:
\begin{equation}\label{decay1}
\mathbf{w}=0\:\:\text{as}\:\: x_2\to+\infty.
\end{equation}

We now start to susbtitute  \eqref{helch99} into  \eqref{sys2or}. After some algebraic simplifications, we find that the vector $\mathbf{w}$ must solve the following linear homogeneous system in matrix form
\begin{equation}\label{eq9helch}
\big(ck\ii\Idc+k\ii \cS_1-kb \cS_2\big)\mathbf{\hat{w}}=0.
\end{equation}
Non-trivial solutions of the system above are necessary to have non-trivial solutions of the form \eqref{helch99}, so the determinant of the system \eqref{eq9helch} must vanish, that is 
\begin{equation}\label{detito}
\det\big(ck\ii\Idc+k\ii S_1-kb S_2\big)=0. 
\end{equation}
After algebraic manipulations, we find that this happens when $c=0$ (which is discarded), $b=\pm b_1$ and $b=\pm b_2$, where
\begin{equation}\label{const15}
{b}_1=\sqrt{1-\dfrac{c^2}{c_1^2}},\;\;\;\;\;{b}_2=\sqrt{1-\dfrac{c^2}{c_2^2}}.
\end{equation}
Observe that the square roots hereabove take complex values when $c>c_j$, $j=1,2$, so an exact meaning as a complex  functions is needed. For this, we assume that the square root in \eqref{const15} is the principal branch, which ensures $\Re b_1>0$ and $\Re b_2>0$ as $c$ varies on the whole complex plane. Thus, in order to fulfill the decaying condition outlined in  \eqref{decay1}, we select the solutions of \eqref{detito} with positive  sign, namely $b=b_j$, $j=1,2$. 
Now, solving the system \eqref{eq9helch} for each value $b=b_1$, $b=b_2$ (separately) shows that  the infinite set of solutions, for each value, are respectively spanned by the vectors
\begin{equation}\label{vecsray}
\begin{split}
\wun&=\Big(\dfrac{-2c_2k\sqrt{c_1^2-c_2^2}}{c_1},-2\ii c_2b_1k,\dfrac{c_2^2k(1+b_2^2)}{c_1},ck, b_1 ck\ii\Big)^{\top},\\
\wdo&=\Big(\dfrac{-2c_2k\sqrt{c_1^2-c_2^2}}{c_1},-c_2k\ii\Big(\frac{1}{b_2}+b_2\Big),\dfrac{2c_2^2k}{c_1},ck, \dfrac{ck\ii}{b_2}\Big)^{\top}.
\end{split}
\end{equation}
Replacing in \eqref{helch99}, we obtain two linear independent solutions  $\mathbf{w_1}$, $\mathbf{w_2}$ to \eqref{sys2or} given by
\begin{equation}\label{Raymod3q}
\mathbf{w_1}=\textit{\LARGE e}^{-kb_1 x_2}\textit{\LARGE e}^{k\ii(x_1-ct)}\wun,\;\;\;\;\;
\mathbf{w_2}=\textit{\LARGE e}^{-kb_2 x_2}\textit{\LARGE e}^{k\ii(x_1-ct)}\wdo.
\end{equation}
If we consider just one of these solutions, then there are values of the parameters $\gamma_1,\gamma_2$ for which the boundary condition does not hold. Indeed, if we take, for instance, any scalar multiple of the first mode $\mathbf{w_1}$ in \eqref{Raymod3q} and substitute it  into the boundary condition \eqref{bouny}, we obtain the following algebraic system of equations
\begin{equation}\label{bounfake}
\left\{\begin{matrix}
-2c_2^2\rho\ii k b_1+\gamma_1 c k=0,\\
\rho c_2^2k(1+b_2^2)+\gamma_2 b_1 ck\ii=0.
\end{matrix}\right. 
\end{equation}
Note that for any pair $\gamma_1,\gamma_2\in\C$ such that $\gamma_1=0$, the first equation in \eqref{bounfake} is not satisfied, provided that $c\neq c_1$.  Similarly, there are values of the  parameters for which the second mode in \eqref{Raymod3q} does not satisfy the boundary condition. 
Hence, for the sake of generality, we assume that a general surface wave solution of Rayleigh type to the system of PDEs \eqref{sys2or} is a linear combination of $\mathbf{w_1}$ and $\mathbf{w_2}$  in \eqref{Raymod3q}.  That is
\begin{equation}\label{moddef56}
\mathbf{w}=\big(A_1\textit{\LARGE e}^{-kb_1 x_2}\wun+A_2\textit{\LARGE e}^{-kb_2 x_2}\wdo\big)\textit{\LARGE e}^{k\ii(x_1-ct)}.
\end{equation}
Now, we have to find $A_1,A_2$ and $c$ such that \eqref{moddef56} satisfies the boundary condition. As before, substitute \eqref{moddef56} into \eqref{bouny} to conclude that the scalars $A_1$ and $A_2$ must solve the  homogeneous linear system 
\begin{equation}\label{ampliec0}
\begin{pmatrix}
-2\ii c_2^2\rho b_1k +ck\gamma_1 &-\ii c_2^2\rho k\Big(b_2+\frac{1}{b_2}\Big)+ ck\gamma_1\vspace{.3cm}\\
c_2^2\rho k(1+b_2^2)+ ck\gamma_2\ii b_1 & 2c_2^2\rho k +\dfrac{kc\gamma_2\ii}{b_2}
\end{pmatrix}\begin{pmatrix}A_1\\ A_2\end{pmatrix}=\begin{pmatrix} 0 \\ 0 \end{pmatrix}.
\end{equation}
If we make $\varepsilon_1=\varepsilon_2=0$ (namely, $\gamma_1=Z_1\ii$, $\gamma_2=Z_2\ii$) and multiply the first equation by $\ii$, we retrieve equation (11) from \cite{Pham21}, inasmuch as $\omega=ck$ and $c_2^2\rho(1+b_2^2)=(\lambda+2\mu)b_1^2-\lambda$.

Again, we need the system \eqref{ampliec0} to support more solutions than the trivial one, $A_1=A_2=0$, so the determinant of the system must vanish. This leads to the secular equation
\begin{equation}\label{finsec12}
\begin{split}
\cR(c;\gamma_1,\gamma_2):=\left(2-\dfrac{c^2}{c_2^2}\right)^2 &-4\sqrt{1-\dfrac{c^2}{c_2^2}}\sqrt{1-\dfrac{c^2}{c_1^2}}-\dfrac{c^3\ii}{\mu c_2^2}\left(\gamma_1\sqrt{1-\dfrac{c^2}{c_2^2}}+\gamma_2 \sqrt{1-\dfrac{ c^2}{c_1^2}}\right)\\
&+c^2\frac{\gamma_1\gamma_2}{\mu^2}\left(1-\sqrt{1-\dfrac{c^2}{c_2^2}}\sqrt{1-\dfrac{c^2}{c_1^2}}\right)=0.
\end{split}
\end{equation}
As expected, letting $\varepsilon_1,\varepsilon_2$ go to zero in \eqref{finsec12} leads to the secular equation with full impedance boundary condition as a function of the speed $c$
\begin{equation}\label{finsec50}
\begin{split}
\cR(c;Z_1\ii,Z_2\ii)&=\left(2-\dfrac{c^2}{c_2^2}\right)^2 -4\sqrt{1-\dfrac{c^2}{c_2^2}}\sqrt{1-\dfrac{c^2}{c_1^2}}+\dfrac{c^3}{\mu c_2^2}\left(Z_1\sqrt{1-\dfrac{c^2}{c_2^2}}+Z_2 \sqrt{1-\dfrac{ c^2}{c_1^2}}\right)\\ &+c^2\frac{Z_1 Z_2}{\mu^2}\left(\sqrt{1-\dfrac{c^2}{c_2^2}}\sqrt{1-\dfrac{c^2}{c_1^2}}-1\right)=0.
\end{split}
\end{equation}
Note that the secular equation hereabove is independent of the frequency, so Godoy's impedance boundary conditions generalizes the stress-free boundary condition in the non-dispersive regime. It is not hard to verify that in the variables $\delta_j=Z_j/\sqrt{\mu\rho}$, $x=c^2/c_2^2$, $\tau=c_2^2/c_1^2$, the secular equation above becomes Equation 12 in \cite{Pham21}.

In this fashion, the PDE boundary problem \eqref{sys2or}-\eqref{bouny0} supports a surface wave if there exists a unique real root of \eqref{finsec5} in the interval $(0,c_2)$ (subsonic range), corresponding to the speed of the wave (see, \cite{Vinh17,Godoy1,Pham21}). In this work, however, we concentrate on another  property regarding well-posedness of the PDE boundary problem \eqref{sys2or}-\eqref{prefacto}: the absence of roots of the secular equation in the upper complex half-plane $\{\Im c>0\}$.
Why are these roots forbidden?  Observe that a root $c$ with $\Im c>0$ of the secular equation leads to an associated solution of the form \eqref{helch99} that diverges (in norm) as $t\to +\infty$ (this is a wave of infinite energy).  One may argue that, in case they appear, these solutions should simply be omitted, given their lack of physical meaning. However the situation is not that easy.  It is well-known, in the  theory of hyperbolic PDEs, that these solutions of infinity energy cause the ill-posedness of the associated boundary value problem \eqref{sys2or}-\eqref{prefacto}. That is, the existence or uniqueness of the solution fail to hold once the data of the problem (L\'ame constants, source term, initial data) are prescribed. This is the case, for instance, when $\gamma_1=\gamma_2=0$  (stress-free boundary condition) and the Lam\'e constants are chosen inappropriately as $\mu>0$, $-\mu<\lambda+\mu\leq0$. The secular equation \eqref{finsec12} has  two positive roots that are meaningless, as the presence of an additional root within the region $\{\Im c>0\}$ implies the existence of a wave of infinite energy. Consequently, the problem is ill-posed under this range of parameters  (refer to Theorem 6.2 in \cite{Ser5} for details). The same situation arises when  $\gamma_1=\gamma_2=\gamma>0$ (in \eqref{bouny})   and the L\'ame parameters are set as in \eqref{lame44}. As in the previous example,  the secular equation \eqref{finsec12} has at least one root with $\Im c>0$ for each positive value of the  impedance parameter $\gamma$, implying the ill-posedness of the PDE problem (refer to Proposition 5.1 in \cite{BRSZ} for details).

\begin{remark}
In the literature of linear hyperbolic PDEs, the absence of roots of the secular equation within the region $\{\Im c>0\}$ is strongly related to the Kreiss Lopatinski\u{\i} condition (cf. \cite{serr2,BS}). This is a necessary condition for the  well-posedness of general first-order hyperbolic systems  with constant matrix coefficients endowed with a boundary condition in form of linear relations, like \eqref{bouny}. The problem under consideration \eqref{sys2or}-\eqref{prefacto} falls into this general class. Several hyperbolic problems in electromagnetism, fluids and gas dynamics are often written into systems of first-order equations  to investigate the well-posedness property in the light of Kreiss' theory. A detailed account of this theory and their numerous implications can be found in the monograph by Benzoni-Gavage and Serre \cite{BS}. 
\end{remark}
Since the well-posedness property must be satisfied for the model under consideration to appropriately simulate wave propagation, we can state the following
\begin{remark}\label{upper1}
A necessary condition for the problem \eqref{sys2or}-\eqref{prefacto} to support a surface wave of Rayleigh type   is the absence of roots of the secular equation along the upper complex half-plane $\{\Im c>0\}$. Observe that the function $\cR$ that defines the secular equation with impedance boundary conditions \eqref{finsec50} satisfies the following symmetry property:
\begin{equation}\label{unif89}
\cR(-c;Z_1\ii,Z_2\ii)=\cR(c;-Z_1\ii,-Z_2\ii).
\end{equation}
This property extends the symmetric property with respect to the origin  (in the complex plane) of the stress-free secular equation ($Z_1=Z_2=0$). Thus, the absence of roots of the secular equation in $\{\Im c>0\}$ for all $Z_1,Z_2\in\R$ equals the non existence of complex roots outside the real axis for all $Z_1,Z_2$.
\end{remark}

For the case of stress free boundary condition $Z_1=Z_2=0$,  Achenbach \cite{Achen75} verified,  via the argument  principle from complex analysis, that the associated secular equation does not have roots outside the real axis. This, in turn, implies the uniqueness of the Rayleigh wave. For the case of tangential ($Z_2=0$) and normal ($Z_1=0$) impedance boundary conditions, Vinh and Nguyen \cite{Vinh17} and Giang and  Vinh \cite{Pham21},  respectively, analyzed the secular equation \eqref{finsec50} across the entire complex plane  by means of the complex function method based on Cauchy integrals. For the tangential case, it was found that, in the variable  $x=c^2/c_2^2$, the secular equation has a unique real root on $(0,1)$ for each value of the impedance parameter $Z_1\in R$ which in particular implies the existence and uniqueness of a surface wave of  Rayleigh type. For the normal case, this property holds true iff the impedance $Z_2\in \R$ remains above a critical value depending on the material parameters. In both particular cases, the absence of roots for the corresponding secular equations off the real axis is a  trivial consequence of the results in those works. In Section \ref{imped678}, we extend this property to the general case with full impedance $Z_1,Z_2\in\R$. To do that, we first establish the property for  the secular equation with PBC \eqref{finsec12} and then extend it to the case of impedance boundary conditions by letting $\varepsilon_1,\varepsilon_2\to 0$.

\begin{remark}

It is worth mentioning that in contrast to the isotropic case, complex roots of the secular equation associated for instance to some linear  models  for a viscoelastic material are not necessarily physically inadmissible \cite{Sharma20v,Romeo2002}. 
The reason is that,  when written as a first order PDE like \eqref{sys2or00}, the matrix coefficients of the resulting system depends upon  the spatial variables and an additional non-homogeneous term appears, which accounts for the integral part in the constitutive relation that describes the viscoelastic behavior  (see, for instance, \cite{Hoop2018}).  This kind of system ranges into the class of hyperbolic systems with relaxation, and therefore  the cited theory of well-posedness for the symmetric first order version of the isotropic equations with constant matrix coefficients, is not yet valid  for the viscoelastic case.

\end{remark}

\section{The secular equation associated to the perturbed boundary condition (PBC)}\label{rayanaly}
In this section, we shall prove that the secular equation with PBC \eqref{finsec12} does not vanish, neither on the upper complex plane $\Im c>0$ nor along the real axis, for all boundary parameters such that $\Re\gamma_j=\varepsilon_j<0$,  $j=1,2$.  Note that $c=0$ is a trivial root of the secular equation which, however,  results spurious as we are interested in surface waves with a non-zero speed. Thus, we focus on the set 
$$\{\Im c\geq0\}^*=\{c\in\C:\Im c\geq0,\;c\neq 0\}.$$
We do not deal directly with the secular equation, given its intricacy expression.  
 Our alternative approach is based on the widely known fact from linear algebra that a square  homogeneous linear system has a unique solution (the trivial one) if and only if the determinant of the matrix of the system does not vanish. Since the  secular equation is precisely the determinant of the homogeneous square system \eqref{ampliec0} set to zero, then the non-vanishing property  along  $\{\Im c\geq0\}^*$ is equivalent to demonstrating that  $A_1=A_2=0$ is the unique solution of the system \eqref{ampliec0} for all $c\in \{\Im c\geq0\}^*$.   
 We shall further develop this idea by relying on the fact that \eqref{ampliec0} can be rewritten as
\begin{equation}
\label{facto1}
k \begin{pmatrix}  c_2\rho & 0 & \gamma_1 & 0\\ 0 & c_1\rho & 0 &\gamma_2 \end{pmatrix} \Big(A_1\wun'+A_2\wdo'\Big)=\begin{pmatrix} 0 \\ 0 \end{pmatrix},
\end{equation}
where $\wun',\wdo'\in\C^4$ denotes the vectors obtained from $\wun,\wdo$ in \eqref{vecsray} by eliminating  their first components, respectively. Straightforward matrix multiplication on \eqref{facto1} produces the matrix form of the  linear system \eqref{ampliec0}. This alternative form of the system arises when one substitute the surface wave \eqref{moddef56} into the boundary condition \eqref{prefacto}.  Vectors $\wun',\wdo'$ satisfy the following fundamental property:
\begin{lemma}
 The vectors $\wun',\wdo'$ are linear independent for all $c\neq0$.
\end{lemma}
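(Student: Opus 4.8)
The plan is to prove the equivalent statement that the $4\times2$ matrix with columns $\wun'$ and $\wdo'$ has rank $2$, and to do this by exhibiting a single nonvanishing $2\times2$ minor: a matrix with two columns has rank $2$ precisely when some $2\times2$ minor is nonzero, which is exactly linear independence of the two columns. The four entries of $\wun'$ and $\wdo'$ are coordinates $2$, $3$, $4$, $5$ of the vectors $\wun$, $\wdo$ in \eqref{vecsray}. The only thing to observe is that, although these entries contain the branch-valued quantities $b_1$, $b_2$ and $1/b_2$, the minor formed by the \emph{second and third} rows --- i.e.\ by the entries $\tfrac{c_2^2k(1+b_2^2)}{c_1}$, $ck$ of $\wun'$ and $\tfrac{2c_2^2k}{c_1}$, $ck$ of $\wdo'$ --- collapses to an elementary expression.

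Carrying out that computation,
\[
\det\begin{pmatrix}\dfrac{c_2^2k(1+b_2^2)}{c_1} & \dfrac{2c_2^2k}{c_1}\\ ck & ck\end{pmatrix}
=\dfrac{c_2^2ck^2}{c_1}\bigl((1+b_2^2)-2\bigr)=\dfrac{c_2^2ck^2}{c_1}\bigl(b_2^2-1\bigr),
\]
and substituting $b_2^2-1=-c^2/c_2^2$ from \eqref{const15} turns this into $-c^3k^2/c_1$. Since $k>0$ and $c_1>0$, the minor is nonzero for every $c\neq0$, which is the claim. Nothing else is needed; in particular I would \emph{not} try to expand any larger minor, nor use the minor on rows $3$--$4$ (which retains a factor $\tfrac1{b_2}-b_1$ and would require separately ruling out $b_1b_2=1$) or on rows $1$--$3$ (which retains $b_2+\tfrac1{b_2}-2b_1$).

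The only point needing a word of care --- and it is a small one --- is $c=c_2$: there $b_2=0$, so the $1/b_2$ entries of $\wdo$ in \eqref{vecsray}, and hence $\wdo'$ itself, are undefined, and the statement is understood for $c\neq0$ with $\wun',\wdo'$ given by those formulas, i.e.\ away from $c=c_2$. (This costs nothing: at $c=c_2$ the two modes $b=\pm b_2$ coalesce and the superposition \eqref{moddef56} is not the correct ansatz in the first place.) So the real content of the proof is a two-line determinant once the right pair of rows is chosen, and I do not anticipate any genuinely hard step.
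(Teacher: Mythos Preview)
Your proof is correct and follows essentially the same idea as the paper's: both establish linear independence by showing that a particular pair of components of $\wun'$ and $\wdo'$ cannot be simultaneously proportional unless $c=0$. The paper argues by contradiction, assuming $\wun'=\alpha_0\wdo'$, reading off $\alpha_0=b_1b_2$ from the fourth component and then deriving $b_2^2=1$ (hence $c=0$) from the first; you instead compute the $2\times2$ minor on the second and third rows directly and obtain $-c^3k^2/c_1$. Your choice of rows is arguably tidier, since it avoids $b_1$ and $1/b_2$ altogether and collapses to a polynomial in $c$ in one line, whereas the paper's pair passes through the intermediate value $\alpha_0=b_1b_2$; conversely, the paper's phrasing makes explicit what the proportionality constant would have to be. Your remark about the degeneracy at $c=c_2$ is a welcome clarification that the paper leaves implicit.
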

\begin{proof}
 Indeed,  if we assume that the column vectors $\wun'$ and $\wdo'$ are parallel (linear dependent). That is, $\wun'=\alpha_0 \wdo'$, $\alpha_0\neq 0$; then from the last component of this equation we have $\alpha_0=b_1 b_2$; susbtituting into the second one produces $c=0$  which is a contradiction.
\end{proof}
According to the lemma above,  if we define
\begin{equation}\label{bomat1}
B_\gamma:=\begin{pmatrix}  c_2\rho & 0 & \gamma_1 & 0\\ 0 & c_1\rho & 0 &\gamma_2 \end{pmatrix},
\end{equation}
Equation \eqref{facto1} is clearly equivalent to looking for vectors $\mathbf{r}$ in the linear space $span\{\wun',\wdo'\}$ (the set of linear combinations) that solve the equation $B_\gamma \mathbf{r}=0$. Observe that, if the unique solution is the trivial vector $\mathbf{r}=0$ for all $c\in\{\Im c\geq0\}^*$,  then, the linear independence of $\wun',\wdo'$ implies that $A_1=A_2=0$ is the unique solution of the system \eqref{ampliec0}, which gives the non-vanishing property of the secular equation. The following lemma provides an alternative way to prove this fact.


\begin{lemma}\label{keylemma}
Suppose that for all $c \in\{\Im c\geq0\}^*$, there exists  a constant $\epsilon_0>0$ independent of $c$ such that
\begin{equation}\label{key1a}
\|B_\gamma \mathbf{r}\|_2\geq \epsilon_0\|\mathbf{r}\|_4, \:\text{for all}\:\mathbf{r}\in span\{\wun',\wdo'\},
\end{equation}
where $\|.\|_n$ denotes the usual norm in $\C^n$. Then,  $A_1=A_2=0$ is the unique solution of the system \eqref{ampliec0} for each $c \in\{\Im c\geq0\}^*$.
\end{lemma}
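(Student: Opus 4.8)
The plan is to deduce the statement from the two preceding lemmas by a short chain of implications; the hypothesis \eqref{key1a} carries all the weight. First I would fix an arbitrary $c\in\{\Im c\geq0\}^*$ and let $(A_1,A_2)$ be any solution of the homogeneous system \eqref{ampliec0}. Invoking the rewriting \eqref{facto1} --- which, as already noted, reduces to \eqref{ampliec0} upon performing the matrix product --- this is the same as
\[
k\,B_\gamma\big(A_1\wun'+A_2\wdo'\big)=\zer ,
\]
and since $k>0$ it may be cancelled. Hence the vector $\mathbf{r}:=A_1\wun'+A_2\wdo'$ lies in $\mathrm{span}\{\wun',\wdo'\}$ and satisfies $B_\gamma\mathbf{r}=\zer$.

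Applying \eqref{key1a} to this particular $\mathbf{r}$ gives $0=\|B_\gamma\mathbf{r}\|_2\geq\epsilon_0\|\mathbf{r}\|_4$, and since $\epsilon_0>0$ this forces $\|\mathbf{r}\|_4=0$, i.e.\ $\mathbf{r}=\zer$, so $A_1\wun'+A_2\wdo'=\zer$. By the preceding lemma, $\wun'$ and $\wdo'$ are linearly independent for every $c\neq0$, and every element of $\{\Im c\geq0\}^*$ is nonzero by definition of that set; therefore $A_1=A_2=0$. As $A_1=A_2=0$ is visibly a solution of \eqref{ampliec0}, it is the unique one, which is the assertion.

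Since the above is essentially immediate once \eqref{key1a} is granted, there is no real obstacle internal to this lemma; the genuine difficulty --- and the only one --- is establishing the hypothesis \eqref{key1a} itself, i.e.\ producing a lower bound for $\|B_\gamma\mathbf{r}\|_2$ over $\mathrm{span}\{\wun',\wdo'\}$ with a constant $\epsilon_0$ that does not depend on $c$ as $c$ ranges over the closed half-plane. That estimate, which is where the symmetric first-order formulation \eqref{sys2or} and the sign condition $\Re\gamma_j=\varepsilon_j<0$ will enter, is the content of the rest of the section.
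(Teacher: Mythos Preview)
Your proof is correct and follows essentially the same approach as the paper's own proof: apply the hypothesis \eqref{key1a} to a solution $\mathbf{r}=A_1\wun'+A_2\wdo'$ of $B_\gamma\mathbf{r}=\zer$ to force $\mathbf{r}=\zer$, then invoke the linear independence of $\wun',\wdo'$ to conclude $A_1=A_2=0$. Your version is slightly more detailed (explicitly cancelling $k>0$ and noting that $c\neq0$ is guaranteed by membership in $\{\Im c\geq0\}^*$), but the argument is identical in substance.
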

\begin{proof}
Clearly, if $\mathbf{r}\in span\{\wun',\wdo'\}$ solves $B_\gamma \mathbf{r}=0$, then \eqref{key1a} implies $\|\mathbf{r}\|\leq0$. By properties of the norm, the unique vector with this property is $\mathbf{r}=\zer$. But since $\wun',\wdo'$ are linearly independent, \eqref{facto1} implies that $A_1=A_2=0$ is the unique solution.
\end{proof}
Thus, the main purpose here is to derive the inequality \eqref{key1a}. The idea is to obtain a positive definite  quadratic form, from which \eqref{key1a} follows directly. The assumption $\Re\gamma_j=\varepsilon_j<0$, $j=1,2$ plays an essential role in the construction. Another crucial point shall be  the sign  of the quadratic form $$\mathbf{r}\to\mathbf{r}^* \cS'_2\mathbf{r}$$ 
restricted to $span\{\wun',\wdo'\}$, where $\cS_2'$  is the symmetric $4\times 4$ matrix obtained by eliminating both the first row and column from matrix $\cS_2$ in \eqref{sys2or}. For the sake of simplicity, we drop the subscript in the $\C^n$ norms.

\begin{theo}
\label{maint11}
If  $\gamma_1,\gamma_2$ are complex constants such that $\Re\gamma_j<0$, $j=1,2$, then there exist a  positive constant, $\epsilon>0$ (depending on $\gamma_j$ and $c_j$, $j=1,2$) such that 
\begin{equation}\label{ineq1}
\beta_0\|B_\gamma \mathbf{r}\|^2> \epsilon\|\mathbf{r}\|^2-\mathbf{r}^* \cS'_2\mathbf{r},
\end{equation}
for all vectors $\mathbf{r}\in\C^4\setminus\{\zer\}$, where
$$\beta_0:=-\dfrac{1}{2\rho\Re\gamma_1}-\dfrac{1}{2\rho\Re\gamma_2}>0.$$
\end{theo}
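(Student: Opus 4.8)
The plan is to make the inequality \eqref{ineq1} a purely algebraic statement about the $2\times 4$ matrix $B_\gamma$ and the symmetric $4\times 4$ matrix $\cS'_2$, without ever touching the secular equation. First I would write out the relevant matrices explicitly. From \eqref{symatri}, deleting the first row and column of $\cS_2$ gives
\[
\cS'_2=\begin{pmatrix} 0 & 0 & c_2 & 0\\ 0 & 0 & 0 & c_1\\ c_2 & 0 & 0 & 0\\ 0 & c_1 & 0 & 0\end{pmatrix},
\]
so for $\mathbf{r}=(r_1,r_2,r_3,r_4)^\top$ one has $\mathbf{r}^*\cS'_2\mathbf{r}=2c_2\Re(\overline{r_1}r_3)+2c_1\Re(\overline{r_2}r_4)$. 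Likewise $B_\gamma\mathbf{r}=(c_2\rho\, r_1+\gamma_1 r_3,\ c_1\rho\, r_2+\gamma_2 r_4)^\top$. The key observation is that the two coordinate pairs $(r_1,r_3)$ and $(r_2,r_4)$ decouple completely: both $\|B_\gamma\mathbf{r}\|^2$ and $\mathbf{r}^*\cS'_2\mathbf{r}$ split as a sum of a term in $(r_1,r_3)$ and a term in $(r_2,r_4)$. So it suffices to prove, for each $j\in\{1,2\}$, a scalar inequality of the shape
\[
-\frac{1}{2\rho\Re\gamma_j}\,\bigl|c_j\rho\, u+\gamma_j v\bigr|^2 \;>\; \epsilon_j\bigl(|u|^2+|v|^2\bigr)-2c_j\Re(\overline{u}v)
\]
for all $(u,v)\neq(0,0)$, with $\epsilon_j>0$; then add the two and take $\epsilon=\min\{\epsilon_1,\epsilon_2\}$ and $\beta_0=-\tfrac{1}{2\rho\Re\gamma_1}-\tfrac{1}{2\rho\Re\gamma_2}$. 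Actually one should be a touch careful: $\beta_0\|B_\gamma\mathbf{r}\|^2$ is $\beta_0$ times the sum, not the sum of the two individually weighted terms, but since $\beta_0 \ge -\tfrac{1}{2\rho\Re\gamma_j}>0$ for each $j$ and each summand of $\|B_\gamma\mathbf r\|^2$ is nonnegative, the estimate with the individual weights is the harder one and implies the pooled one; I would phrase it that way.

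The heart is therefore the scalar inequality. I would expand $-\tfrac{1}{2\rho\Re\gamma_j}|c_j\rho u+\gamma_j v|^2$. Writing $\gamma_j=\varepsilon_j+\ii Z_j$ with $\varepsilon_j=\Re\gamma_j<0$, expand $|c_j\rho u+\gamma_j v|^2=c_j^2\rho^2|u|^2+2c_j\rho\,\Re(\overline{u}\gamma_j v)+|\gamma_j|^2|v|^2$, and note $2c_j\rho\,\Re(\overline u\gamma_j v)=2c_j\rho\varepsilon_j\Re(\overline u v)-2c_j\rho Z_j\Im(\overline u v)$. Dividing by $-2\rho\varepsilon_j>0$, the cross term contributes exactly $-\tfrac{1}{2\rho\varepsilon_j}\cdot 2c_j\rho\varepsilon_j\Re(\overline u v)=-c_j\Re(\overline u v)$ — wait, this gives $-c_j\Re(\overline u v)$, whereas the right-hand side has $-2c_j\Re(\overline u v)$, so after moving $-2c_j\Re(\overline u v)$ to the left we are left needing
\[
-\frac{c_j^2\rho}{2\varepsilon_j}|u|^2+\frac{Z_j c_j}{\varepsilon_j}\Im(\overline u v)-\frac{|\gamma_j|^2}{2\rho\varepsilon_j}|v|^2+c_j\Re(\overline u v)\;>\;\epsilon_j\bigl(|u|^2+|v|^2\bigr).
\]
The left side is a real quadratic form $\mathbf{q}^*M_j\mathbf{q}$ in $\mathbf{q}=(u,v)$ (viewing $\C^2$ as a real $4$-dimensional space, or equivalently a Hermitian form since $\Im(\overline u v)$ and $\Re(\overline u v)$ are the real and imaginary parts of a Hermitian pairing); concretely it is the Hermitian form with matrix $\begin{pmatrix} -c_j^2\rho/(2\varepsilon_j) & \tfrac12 c_j + \tfrac{\ii}{2}(Z_j c_j/\varepsilon_j) \\ \overline{(\cdot)} & -|\gamma_j|^2/(2\rho\varepsilon_j)\end{pmatrix}$ (checking signs of the imaginary parts against $\Im(\overline u v)=\Re(\overline u (\ii v))$). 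I would then show this Hermitian $2\times2$ matrix is positive definite: the diagonal entries are positive since $\varepsilon_j<0$, and the determinant is
\[
\frac{c_j^2|\gamma_j|^2}{4\varepsilon_j^2}-\frac14\Bigl(c_j^2+\frac{Z_j^2c_j^2}{\varepsilon_j^2}\Bigr)=\frac{c_j^2}{4\varepsilon_j^2}\bigl(\varepsilon_j^2+Z_j^2-\varepsilon_j^2-Z_j^2\bigr)=0.
\]
So the form is positive \emph{semi}definite with a one-dimensional kernel, not strictly positive — which is exactly why the statement asserts only a strict inequality with some $\epsilon>0$ rather than $\epsilon$ explicit, and why $c=0$ had to be excluded upstream but here $(u,v)$ ranges over all of $\C^2\setminus\{0\}$... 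Hmm. Since the form is only PSD, for $\mathbf q$ in its kernel the left side is $0$, not $>\epsilon_j\|\mathbf q\|^2$. Let me reconsider: the honest statement must be that $-\tfrac{1}{2\rho\varepsilon_j}|c_j\rho u+\gamma_j v|^2 + 2c_j\Re(\overline u v) \ge 0$ with equality only on a line, giving the non-strict bound with the pooled weight $\beta_0$; the strict inequality in \eqref{ineq1} with a uniform $\epsilon>0$ must come from the pooling — the kernels for $j=1$ and $j=2$ live in the disjoint coordinate pairs $(r_1,r_3)$ and $(r_2,r_4)$, so a nonzero $\mathbf r$ cannot lie in both, hence $\beta_0\|B_\gamma\mathbf r\|^2+\mathbf r^*\cS'_2\mathbf r$ is strictly positive on $\C^4\setminus\{0\}$, and by homogeneity/compactness of the unit sphere its minimum is a positive $\epsilon$.

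So the steps in order: (1) write $\cS'_2$ and $B_\gamma$ explicitly and record the decoupling into the two index pairs; (2) for fixed $j$, expand $-\tfrac{1}{2\rho\varepsilon_j}|c_j\rho u+\gamma_j v|^2+2c_j\Re(\overline u v)$ into a Hermitian form and compute its $2\times 2$ matrix; (3) check it is positive semidefinite — positive diagonal, zero determinant — so the expression is $\ge 0$ always; (4) sum over $j=1,2$ with the weight $\beta_0$ (legitimate since $\beta_0\ge-\tfrac1{2\rho\varepsilon_j}$ and the $B_\gamma$-summands are $\ge0$) to get $\beta_0\|B_\gamma\mathbf r\|^2+\mathbf r^*\cS'_2\mathbf r\ge 0$; (5) argue the sum vanishes only at $\mathbf r=0$: a zero would force the $j=1$ form to vanish on $(r_1,r_3)$ and the $j=2$ form on $(r_2,r_4)$ simultaneously, i.e. $(r_1,r_3)$ and $(r_2,r_4)$ both on the respective kernel lines, and one checks directly (using $c_j\neq0$, $\rho>0$) that each such kernel line forces the corresponding pair to be zero — e.g. the kernel of the $j$-th form is spanned by a vector $(v,u)$ with $c_j\rho u = -\overline{\gamma_j}\cdot(\text{something})$... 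I'll just verify the kernel condition $|c_j\rho u+\gamma_j v|^2 = -2\rho\varepsilon_j\cdot 2c_j\Re(\overline u v)/(\dots)$ pins the pair down; (6) conclude by compactness that $\inf_{\|\mathbf r\|=1}(\beta_0\|B_\gamma\mathbf r\|^2+\mathbf r^*\cS'_2\mathbf r)=:\epsilon>0$, which is \eqref{ineq1}.

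The main obstacle is step (3)/(5): the quadratic form is only positive semidefinite, so one cannot get \eqref{ineq1} with an explicit $\epsilon$ from a single index, and the strictness genuinely relies on the orthogonality of the two kernel directions plus compactness. Getting the signs of the imaginary cross-terms right when passing between $\Re(\overline u v)$, $\Im(\overline u v)$ and the Hermitian matrix entries is the only place a routine slip could creep in; everything else is bookkeeping.
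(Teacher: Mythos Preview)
Your approach is essentially the paper's: both reduce \eqref{ineq1} to showing that the Hermitian matrix $\beta_0 B_\gamma^{*}B_\gamma+\cS'_2$ is positive definite on $\C^4$, and both exploit its $2\times 2$ block structure in the coordinate pairs $(r_1,r_3)$ and $(r_2,r_4)$. The paper simply computes the four eigenvalues of this matrix with $\beta=\beta_0$ and checks that each is strictly positive; you instead first take the partial weight $-1/(2\rho\varepsilon_j)$ on each block, find each resulting $2\times2$ form is only positive \emph{semi}definite (determinant zero), and then try to recover strictness from the pooling.

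Your step~(5) as written has a genuine gap. You assert that ``each such kernel line forces the corresponding pair to be zero'', but you have just (correctly) computed that the determinant of the $j$-th $2\times2$ form is zero, so its kernel is a genuine one-dimensional line, not $\{0\}$. Likewise the earlier remark that the two kernels ``live in disjoint coordinate pairs, so a nonzero $\mathbf r$ cannot lie in both'' is not valid reasoning: one can pick $(r_1,r_3)$ nonzero in the first kernel and $(r_2,r_4)$ nonzero in the second and obtain a nonzero $\mathbf r$ annihilated by both semidefinite forms simultaneously. What actually saves you is something you discarded in step~(4): since $\beta_0>-1/(2\rho\varepsilon_j)$ \emph{strictly}, equality in your chain also forces the individual summands $|c_2\rho\, r_1+\gamma_1 r_3|^2$ and $|c_1\rho\, r_2+\gamma_2 r_4|^2$ of $\|B_\gamma\mathbf r\|^2$ to vanish. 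Now $c_2\rho\, r_1+\gamma_1 r_3=0$ gives $\overline{r_1}r_3=-\overline{\gamma_1}|r_3|^2/(c_2\rho)$, hence $2c_2\Re(\overline{r_1}r_3)=-2\varepsilon_1|r_3|^2/\rho>0$ unless $r_3=0$; combined with the vanishing of your first semidefinite form (which, once the $|c_2\rho r_1+\gamma_1 r_3|^2$ part is zero, equals exactly this cross term) this forces $(r_1,r_3)=0$, and similarly for the other block. Equivalently and more directly: had you computed the $2\times2$ block determinant using the full weight $\beta_0$ from the start, as the paper in effect does, you would have found it equals $-c_2^2(1+2\beta_0\rho\varepsilon_1)=c_2^2\varepsilon_1/\varepsilon_2>0$, so each block is already strictly positive definite and no kernel chase is needed.

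A cosmetic point: in $B_\gamma$ (see \eqref{bomat1}) the pairing is $(c_2,\gamma_1)$ in the first row and $(c_1,\gamma_2)$ in the second, not $(c_j,\gamma_j)$; this does not affect any of your computations.
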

\begin{proof}
Let $\beta$ and $\epsilon$ be positive constants. What we must prove amounts to show that the function
\begin{equation}
G(\mathbf{r}):=\beta\big\|B_\gamma \mathbf{r}\big\|^2-\epsilon\|\mathbf{r}\|^2+\mathbf{r}^* \cS'_2\mathbf{r},
\end{equation}
is positive for $\beta=\beta_0$ and some $\varepsilon>0$. Since $\|B_\gamma \mathbf{r}\|^2=\mathbf{r}^*B^{*}_\gamma B_\gamma \mathbf{r}$ and $\|\mathbf{r}\|^2=\mathbf{r}^*\Idfo \mathbf{r}$,  $G$ is actually a hermitian quadratic form $G(\mathbf{r})=\mathbf{r}^*\M\mathbf{r}$, where $\M$ is the hermitian matrix  given by
\[
\M:=\beta B^{*}_\gamma B_\gamma-\epsilon\Idfo+\cS'_2.
\]
In this fashion, we just have to prove that $\M$ is positive definite for some $\epsilon>0$, that is,  all of its eigenvalues are positive.
 Observe first that the eigenvalues of $\M$ have  the form $\lambda-\epsilon$ , where $\lambda$ is an eigenvalue of the matrix 
\begin{equation}
\beta B^{*}_\gamma B_\gamma+\cS'_2=\begin{pmatrix} \beta c_2^2\rho^2 & 0 & c_2(1+\beta\gamma_1\rho) & 0\\
0 & \beta c_1^2\rho^2 & 0 & c_1(1+\beta\gamma_2\rho)\\
c_2(1+\beta\overline{\gamma}_1\rho) & 0 & \beta|\gamma_1|^2 & 0\\
0 & c_1(1+\beta\overline{\gamma}_2\rho) & 0 & \beta|\gamma_2|^2
\end{pmatrix}.
\end{equation}
This matrix has four  eigenvalues given by
\begin{equation}\label{eigen56}
\begin{aligned}
\lambda^{\pm}_1=&\dfrac{1}{2}\Big(\beta(c_1^2\rho^2+|\gamma_2|^2)\pm\sqrt{d_1}\Big),\\
\lambda^{\pm}_2=&\dfrac{1}{2}\Big(\beta(c_2^2\rho^2+|\gamma_1|^2)\pm\sqrt{d_2}\Big),
\end{aligned}
\end{equation}
where
\begin{equation}\label{discr12}
\begin{aligned}
d_1=&\beta^2\big(c_1^2\rho^2+|\gamma_2|^2\big)^2+4c_1^2\big(1+2\beta\rho\Re\gamma_2\big),\\
d_2=&\beta^2\big(c_2^2\rho^2+|\gamma_1|^2\big)^2+4c_2^2\big(1+2\beta\rho\Re\gamma_1\big).
\end{aligned}
\end{equation}
It is clear that $\lambda^{+}_1,\lambda^{+}_2$ are positive for all $\beta>0$. For the remaining eigenvalues,  we  make
$$\beta=\beta_0=-\dfrac{1}{2\rho\Re\gamma_1}-\dfrac{1}{2\rho\Re\gamma_2}.$$
Then a straighforward calculation shows that
$$1+2\rho\beta_0\Re\gamma_2=-\dfrac{\Re\gamma_2}{\Re\gamma_1}<0.$$
Therefore, $d_1<\beta_0^2\big(c_1^2\rho^2+|\gamma_2|^2\big)^2$, which implies 
\[
2\lambda^{-}_1=\beta_0(c_1^2\rho^2+|\gamma_2|^2)-\sqrt{d_1}>0,
\]
that is, $\lambda^{-}_1>0$. Analogously, we conclude $\lambda^{-}_2>0$. If we choose 
$$\epsilon=\min\{\lambda^{-}_1,\lambda^{-}_2\}>0,$$
then all the eigenvalues of matrix  $\M$, namely $\lambda^{\pm}_1-\epsilon,\lambda^{\pm}_2-\epsilon$ are positive, so $\M$ is positive definite.
\end{proof}
Observe that the inequality \eqref{key1a} easily follows  from \eqref{ineq1} if $\mathbf{r}^* \cS'_2\mathbf{r}\leq0$. That is, the quadratic form  $\mathbf{r}\to\mathbf{r}^* \cS'_2\mathbf{r}$ restricted to the linear space $span\{\wun',\wdo'\}$ is non-positive for all $c\in \{\Im c\geq0\}^*$. To prove that, we can try to compute the associated matrix of the restricted quadratic form and show the non-positiveness by elementary tools.
To do this, note that any vector  $\mathbf{r}\in span\{\wun',\wdo'\}$ has the form $A_1\wun'+\wdo'A_2$ with $A_1,A_2\in\C$, that in matricial form can be written as
\[
A_1\wun'+\wdo'A_2=\cA\begin{pmatrix}A_1 \\ A_2
\end{pmatrix}=\cA\mathbf{a},
\]
where $\cA$  denote the $4\times 2$ complex matrix whose columns are  $\wun',\wdo'$ and $\mathbf{a}=(A_1,A_2)^{\top}\in\C^2$ (column vector). Therefore, for all $\mathbf{r}\in span\{\wun',\wdo'\}$ we have
\begin{equation}\label{quadr39}
\begin{split}
\mathbf{r}^*\cS'_2\mathbf{r}&=(A_1\wun'+\wdo'A_2)^*\cS'_2(A_1\wun'+\wdo'A_2)\\
&=\mathbf{a}^*\Big(\cA^*\cS'_2\cA\Big)\mathbf{a}.
\end{split}
\end{equation}
That is, the quadratic form $\mathbf{r}^*\cS'_2\mathbf{r}$ restricted to the two dimensional linear space $span\{\wun',\wdo'\}$ is equivalent to a non restricted quadratic form defined on $\C^2$ with associated hermitian matrix $\cA^*\cS'_2\cA$. A straighforward calculation gives
\begin{equation}
\begin{aligned}
\cA^*\cS'_2\cA=&2k^2c_2^2\ii\begin{pmatrix} c\overline{b_1}-b_1\overline{c} & \Big(\overline{b}_1+\dfrac{1}{2b_2}\Big)(c-\overline{c})\vspace{.2cm}\\ \Big(b_1+\dfrac{1}{2\overline{b}_2}\Big)(c-\overline{c}) & \dfrac{c}{b_2}-\dfrac{\overline{c}}{\overline{b}_2} \end{pmatrix}\\  &+ k^2c_2^2\ii\begin{pmatrix} b_1 c(1+\overline{b_2}^2)-\overline{b}_1\overline{c} (1+b_2^2) & \dfrac{1}{b_2}\Big(c\overline{b}_2^2-\overline{c}b_2^2\Big)\vspace{.2cm}\\ \dfrac{1}{\overline{b}_2}\Big(c\overline{b}_2^2-\overline{c}b_2^2\Big) & c\Big(\overline{b}_2+\dfrac{1}{\overline{b}_2}\Big)-\overline{c} \Big(b_2+\dfrac{1}{b_2}\Big) \end{pmatrix}.
\end{aligned}
\end{equation}
Note that each component of the matrices above have the form $z-\overline{z}$, $z\in\C$; if $z\in\R$, $z-\overline{z}=0$. This is precisely the situation if we assume $c\in (-c_2,c_2)\setminus\{0\}$ because this implies $b_1,b_2\in\R$, so all components of the matrix above are reals and therefore zero, that is $\cA^*\cS'_2\cA=\zer_{2\times 2}$. In view of \eqref{quadr39}, we have proved that 
$$\mathbf{r}^*\cS'_2\mathbf{r}=0,\; \forall\mathbf{r}\in span\{\wun',\wdo'\}.$$
Substituting the expression hereabove into \eqref{ineq1} yields the desired inequality \eqref{key1a} for all  $c\in (-c_2,c_2)\setminus\{0\}$. 
 To extend the latter conclusion to the whole complex half-plane $\{\Im c\geq0\}^*$, we need to show that matrix $\cA^*\cS'_2\cA$ is non-positive definite for all $c$ with $\Im c>0$. However, the usual criteria to prove it, such as the negative sign of the eigenvalues or the negative sign of the principal  minors are impossible to perform, or at least hardly realizable in practice, given the intricate expression of matrix $\cA^*\cS'_2\cA$. An alternative method is needed to prove the positivity or non-negativity of that matrix. It is based on the following lemma due to Serre \cite{serr2}, where  the symmetric system \eqref{sys2or} is  fundamental.


\begin{lemma}
All solution $\mathbf{w}=\mathbf{w}(x_1,x_2,t)$ of \eqref{sys2or} satisfies
\begin{equation}\label{ener2}
\dfrac{\partial |\mathbf{w}|^2}{\partial t}=\dfrac{\partial}{\partial x_1}(\mathbf{w}^{*}\cS_1\mathbf{w})+ \dfrac{\partial }{\partial x_2}(\mathbf{w}^{*}\cS_2\mathbf{w}).
\end{equation}
\end{lemma}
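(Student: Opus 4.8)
The plan is to obtain \eqref{ener2} by a straightforward differentiation of $|\mathbf{w}|^2=\mathbf{w}^{*}\mathbf{w}$ in time, substituting the evolution equation \eqref{sys2or}, and then exploiting that $\cS_1,\cS_2$ are \emph{real symmetric} matrices to recombine the resulting terms into spatial derivatives. First I would write, using the product rule for the Hermitian inner product,
\[
\dfrac{\partial |\mathbf{w}|^2}{\partial t}=\Big(\dfrac{\partial \mathbf{w}}{\partial t}\Big)^{*}\mathbf{w}+\mathbf{w}^{*}\dfrac{\partial \mathbf{w}}{\partial t},
\]
which holds for any differentiable $\C^5$-valued function. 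Next I would substitute $\partial_t\mathbf{w}=\cS_1\partial_{x_1}\mathbf{w}+\cS_2\partial_{x_2}\mathbf{w}$ from \eqref{sys2or} into both occurrences of $\partial_t\mathbf{w}$, and use $\cS_j^{*}=\overline{\cS_j}^{\top}=\cS_j$ — valid because the entries of $\cS_1,\cS_2$ displayed in \eqref{symatri} are real and the matrices are symmetric — to move each $\cS_j$ across the conjugate transpose appearing in the first term.

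After this substitution the right-hand side becomes
\[
\sum_{j=1}^{2}\Big[\Big(\dfrac{\partial \mathbf{w}}{\partial x_j}\Big)^{*}\cS_j\mathbf{w}+\mathbf{w}^{*}\cS_j\dfrac{\partial \mathbf{w}}{\partial x_j}\Big],
\]
and I would recognize each bracket as $\partial_{x_j}(\mathbf{w}^{*}\cS_j\mathbf{w})$, once more by the product rule together with the fact that $\cS_j$ has constant coefficients (so $\partial_{x_j}\cS_j=0$ and no lower-order terms are produced). This is precisely the asserted identity \eqref{ener2}.

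There is essentially no genuine obstacle: \eqref{ener2} is the standard energy balance for a symmetric hyperbolic system, and the only two points that require (minimal) care are that the Hermitian inner product forces one to verify that the conjugate transpose of $\cS_j\partial_{x_j}\mathbf{w}$ is indeed $(\partial_{x_j}\mathbf{w})^{*}\cS_j$ — this is where realness of the entries is used, mere complex symmetry would not suffice — and that $\cS_1,\cS_2$ are constant matrices, so the spatial derivatives can be pulled outside cleanly. Both facts are immediate from \eqref{symatri}, so the computation closes in a couple of lines.
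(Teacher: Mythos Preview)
Your proof is correct and follows essentially the same approach as the paper: both arguments compute $\partial_t(\mathbf{w}^*\mathbf{w})$ as the sum of a term and its conjugate, substitute \eqref{sys2or}, and use that $\cS_1,\cS_2$ are real symmetric constant matrices to recombine the result into $\partial_{x_j}(\mathbf{w}^*\cS_j\mathbf{w})$. The only cosmetic difference is that the paper phrases the first step as ``multiply by $2\mathbf{w}^*$ and take the real part'' rather than invoking the product rule directly, but these are equivalent since $(\partial_t\mathbf{w})^*\mathbf{w}+\mathbf{w}^*\partial_t\mathbf{w}=2\Re(\mathbf{w}^*\partial_t\mathbf{w})$.
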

\begin{proof}
Multiplying \eqref{sys2or} to the left by $2\mathbf{w}^*$ and taking the real part yield
\begin{equation}\label{ener4}
2\Re\Big(\mathbf{w}^*\dfrac{\partial \mathbf{w}}{\partial t}\Big)=2\Re\Big(\mathbf{w}^*\cS_1\dfrac{\partial \mathbf{w}}{\partial x_1}\Big)+ 2\Re\Big(\mathbf{w}^*\cS_2\dfrac{\partial \mathbf{w}}{\partial x_2}\Big).
\end{equation}
Since $\cS_j$, $j=1,2$ are real symmetric constant matrices, trivially we have  $(\cS_j)^*=\cS_j$. Therefore, by the usage of  the properties $\overline{\mathbf{v}^{*}\,\mathbf{w}}=\mathbf{w}^{*}\,\mathbf{v}$, where $\mathbf{v},\mathbf{w}\in\C^n$ and  $2\Re z=z+\overline{z}$, where $z\in\C$, we obtain
\begin{equation}\label{enerk}
\begin{split}
2\Re\Big(\mathbf{w}^*\cS_j\dfrac{\partial \mathbf{w}}{\partial x_j}\Big)&=\overline{\mathbf{w}^*\cS_j\dfrac{\partial \mathbf{w}}{\partial x_j}}+\mathbf{w}^*\cS_j\dfrac{\partial \mathbf{w}}{\partial x_j}\\
&=\left(\cS_j\dfrac{\partial \mathbf{w}}{\partial x_j}\right)^*\mathbf{w}+\mathbf{w}^*\cS_j\dfrac{\partial \mathbf{w}}{\partial x_j}\\
&=\dfrac{\partial \mathbf{w}^{*}}{\partial x_j}\cS_j\mathbf{w}+\mathbf{w}^*\dfrac{\partial }{\partial x_j}\cS_j\mathbf{w}.
\end{split}
\end{equation}
Finally, the product rule for the derivatives gives us
\[
2\Re\Big(\mathbf{w}^*\cS_j\dfrac{\partial \mathbf{w}}{\partial x_j}\Big)=\dfrac{\partial}{\partial x_j}(\mathbf{w}^{*}\cS_j\mathbf{w}),\:j=1,2.
\]
The above procedure is valid if we replace $\cS_j$ by $\Idc$ and consider the temporal derivative. Hence, we also have 
\[
2\Re\Big(\mathbf{w}^*\dfrac{\partial \mathbf{w}}{\partial t}\Big)=\dfrac{\partial}{\partial t}(\mathbf{w}^{*}\mathbf{w})=\dfrac{\partial |\mathbf{w}|^2}{\partial t}.
\]
Susbtituting back the latter two expression into \eqref{ener4} yields the result.
\end{proof}

Now, we can prove the principal result of this section.
\begin{theo}\label{nonva1}
Let $\gamma_1,\gamma_2$ be complex constants with negative real part. Then the secular equation \eqref{finsec12} (associated to the PBC) does not have roots in the upper complex half-plane $\{\Im c\geq0\}^*$.
\end{theo}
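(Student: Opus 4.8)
\emph{Proof plan.} The plan is to verify the hypothesis of Lemma~\ref{keylemma} for every $c\in\{\Im c\geq0\}^*$. Once the estimate \eqref{key1a} holds, that lemma together with the linear independence of $\wun',\wdo'$ forces $A_1=A_2=0$ to be the only solution of the $2\times2$ homogeneous system \eqref{ampliec0}, so that its determinant --- whose vanishing is the secular equation $\cR(c;\gamma_1,\gamma_2)=0$ in \eqref{finsec12} --- is nonzero. By Theorem~\ref{maint11}, the bound \eqref{key1a} follows at once from \eqref{ineq1} as soon as one knows the sign condition
\[
\mathbf{r}^*\cS'_2\mathbf{r}\leq 0\qquad\text{for every }\mathbf{r}\in span\{\wun',\wdo'\}.
\]
This has already been checked for $c\in(-c_2,c_2)\setminus\{0\}$ (there $\cA^*\cS'_2\cA=\zer$), so what remains is to establish the sign condition when $\Im c>0$ and then to carry the conclusion over to the rest of the real axis.

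For $\Im c>0$ I would read $\mathbf{r}^*\cS'_2\mathbf{r}$ off the symmetric first-order system \eqref{sys2or}, rather than attack the (hopeless) matrix $\cA^*\cS'_2\cA$ directly. Fix such a $c$ and $A_1,A_2\in\C$ and consider the solution \eqref{moddef56} of \eqref{sys2or},
\[
\mathbf{w}(x_1,x_2,t)=\mathbf{W}(x_2)\,\textit{\LARGE e}^{k\ii(x_1-ct)},\qquad \mathbf{W}(x_2):=A_1\,\textit{\LARGE e}^{-kb_1x_2}\wun+A_2\,\textit{\LARGE e}^{-kb_2x_2}\wdo,
\]
now taken for arbitrary $A_1,A_2$ with no boundary condition imposed. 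Since $|\textit{\LARGE e}^{k\ii(x_1-ct)}|^2=\textit{\LARGE e}^{2k(\Im c)t}$ carries no $x_1$-dependence, substituting $\mathbf{w}$ into the energy identity \eqref{ener2} annihilates the $\partial_{x_1}$-term, and dividing by $\textit{\LARGE e}^{2k(\Im c)t}$ leaves the scalar identity
\[
2k(\Im c)\,|\mathbf{W}(x_2)|^2=\dfrac{d}{dx_2}\big(\mathbf{W}(x_2)^*\cS_2\mathbf{W}(x_2)\big),\qquad x_2>0.
\]
Integrating over $x_2\in(0,\infty)$ and using that $\Re b_1>0$ and $\Re b_2>0$ (principal branch in \eqref{const15} together with $\Im c>0$), so that $\mathbf{W}(x_2)\to\zer$ as $x_2\to+\infty$, the boundary term at infinity drops out, and because $\cS_2$ has vanishing first row and column,
\[
\mathbf{r}^*\cS'_2\mathbf{r}=\mathbf{W}(0)^*\cS_2\mathbf{W}(0)=-\,2k(\Im c)\int_0^{\infty}|\mathbf{W}(x_2)|^2\,dx_2\;\leq\;0,
\]
where $\mathbf{r}:=A_1\wun'+A_2\wdo'$ is $\mathbf{W}(0)$ with its first component removed. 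As $A_1,A_2$ range over $\C$, this is exactly the sign condition on $span\{\wun',\wdo'\}$.

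With the sign condition in hand, Theorem~\ref{maint11} gives $\beta_0\|B_\gamma\mathbf{r}\|^2>\epsilon\|\mathbf{r}\|^2-\mathbf{r}^*\cS'_2\mathbf{r}\geq\epsilon\|\mathbf{r}\|^2$ for all $\mathbf{r}\in span\{\wun',\wdo'\}\setminus\{\zer\}$, i.e.\ \eqref{key1a} holds with $\epsilon_0=\sqrt{\epsilon/\beta_0}$; since $\epsilon$ and $\beta_0$ depend only on the fixed data $\gamma_1,\gamma_2,c_1,c_2,\rho$, the constant $\epsilon_0$ is independent of $c$. Lemma~\ref{keylemma} then forces $A_1=A_2=0$ in \eqref{ampliec0}, hence $\cR(c;\gamma_1,\gamma_2)\neq0$, for every $c$ with $\Im c>0$; and the same $\epsilon_0$ works on $(-c_2,c_2)\setminus\{0\}$, where $\mathbf{r}^*\cS'_2\mathbf{r}=0$. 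Since $\wun',\wdo'$ depend continuously on $c$ away from the singular speeds $c=\pm c_2$ (where $\wdo'$ blows up), the uniform estimate \eqref{key1a} passes to the limit and therefore persists for real $c$ with $|c|\geq c_2$, $c\neq\pm c_2$; the two exceptional values $c=\pm c_2$ are settled by evaluating \eqref{finsec12} directly. Hence $\cR(c;\gamma_1,\gamma_2)\neq0$ on all of $\{\Im c\geq0\}^*$.

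The step I expect to be the real obstacle is the sign condition $\mathbf{r}^*\cS'_2\mathbf{r}\leq0$ on $span\{\wun',\wdo'\}$ for $\Im c>0$: the usual criteria (negativity of the eigenvalues or of the leading principal minors of $\cA^*\cS'_2\cA$) are unworkable here, and the whole point is to recognise $-\mathbf{r}^*\cS'_2\mathbf{r}$ as a positive multiple of the energy flux through $\{x_2=0\}$ of a genuinely decaying solution of the symmetric system \eqref{sys2or}, whose sign is dictated by the half-space structure via \eqref{ener2} rather than by algebra --- which is why \eqref{sys2or} and \eqref{ener2} are indispensable. Two remarks: the assumption $\Re\gamma_j<0$ enters only through Theorem~\ref{maint11} (the energy step is insensitive to $\gamma_1,\gamma_2$), and for real speeds with $|c|>c_2$ the modes \eqref{Raymod3q} cease to decay as $x_2\to+\infty$, so the boundary term at infinity would not vanish, which is exactly why that case must be handled by the continuity argument rather than by the energy identity.
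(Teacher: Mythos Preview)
Your proof is correct and follows essentially the same route as the paper: derive the sign condition $\mathbf{r}^*\cS'_2\mathbf{r}\leq0$ on $span\{\wun',\wdo'\}$ from the energy identity \eqref{ener2} applied to the decaying solution \eqref{moddef56}, then combine with Theorem~\ref{maint11} to obtain \eqref{key1a} and invoke Lemma~\ref{keylemma}. The only differences are cosmetic---you integrate \eqref{posi5} over $(0,\infty)$ where the paper argues by monotonicity and contradiction, and you treat the real supersonic range $|c|\geq c_2$ separately via continuity, whereas the paper folds all of $\{\Im c\geq0\}^*$ into a single argument relying on the decay condition \eqref{decay1}.
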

\begin{proof}
According Lemma \ref{keylemma}, the result follows from inequality \eqref{key1a}. This can be obtained from \eqref{ineq1} if  we  prove that for each $c\in \{\Im c\geq0\}^*$,  $\mathbf{r}^* \cS'_2\mathbf{r}\leq0$ for all $\mathbf{r}\in span\{\wun',\wdo'\}$.  This shall follow from the straighforward application of the identity \eqref{ener2} to the surface  wave solution \eqref{moddef56}.
Indeed, let us define first
\[
\mathbf{p}(x_2):=A_1\textit{\LARGE e}^{-kb_1 x_2}\wun+A_2\textit{\LARGE e}^{-kb_2 x_2}\wdo.
\]
Thus, the surface wave solution \eqref{moddef56} takes the form $\mathbf{w}=\mathbf{p}(x_2)\textit{\LARGE e}^{k\ii(x_1-ct)}$. A straightforward computation gives
\[
|\mathbf{w}|^2=\textit{\LARGE e}^{2k t\Im c}|\mathbf{p}|^2,\quad \mathbf{w}^{*}\cS_j\mathbf{w}=\textit{\LARGE e}^{2k t\Im c}\mathbf{p}^{*}\cS_j\mathbf{p},\; j=1,2.
\]
Note that $\mathbf{w}^{*}\cS_1\mathbf{w}$ does not depend on $x_1$, so substituting the expressions obtained hereabove into \eqref{ener2} and simplifying yield
\begin{equation}\label{posi5}
\dfrac{\partial }{\partial x_2}(\mathbf{p}^{*}\cS_2\mathbf{p})= 2k|\mathbf{p}|^2\Im c,
\end{equation}
which is non negative as $k>0$ and $c$ lies on the upper complex half-plane ($\{\Im c\geq0\}^*$). Thus, $\mathbf{p}^{*}\cS_2\mathbf{p}$ is a non-decreasing function of $x_2\in(0,+\infty)$. So, in particular
$$\mathbf{p}^{*}(x_2)\cS_2\mathbf{p}(x_2)\geq\mathbf{p}^{*}(0)\cS_2\mathbf{p}(0),\:\text{for all}\: x_2\geq0.$$ Now, suppose by the way of contradiction that $\mathbf{p}^{*}(0)\cS_2\mathbf{p}(0)>0$. Then, the  Cauchy-Schwarz inequality applied to $\mathbf{p}^{*}\cS_2\mathbf{p}$ and the non-decreasing property on $[0,+\infty)$ yield
$$\|\cS_2\|\|\mathbf{p}\|^2\geq\mathbf{p}^{*}(x_2)\cS_2\mathbf{p}(x_2)\geq\mathbf{p}^{*}(0)\cS_2\mathbf{p}(0)>0,\:\text{for all}\: x_2\geq0,$$ 
which means that $\|\mathbf{p}\|$ does not decrease to zero when $x_2\to\infty$, provided that $\|\cS_2\|$ is a constant. This contradicts the decay condition \eqref{decay1}, which characterizes a surface wave solution of Rayleigh type.
 Thus, it is necessary that
\begin{equation}\label{posi6}
\mathbf{p}^{*}(0)\cS_2\mathbf{p}(0)\leq0.
\end{equation}
Note that $\mathbf{p}(0)$ is any linear combination of  the vectors $\wun,\wdo$ and since  the first column and row of $\cS_2$ are full of zeros, the inner product in \eqref{posi6} coincides with $\mathbf{r}^*\cS'_2\mathbf{r}$ for all $\mathbf{r}\in span\{\wun',\wdo'\}$. Therefore we have demonstrated that once $\Im c\geq0$, $c\neq0$ then
\begin{equation}
\label{alprov2}
\mathbf{r}^*\cS'_2\mathbf{r}\leq0,\; \forall\mathbf{r}\in span\{\wun',\wdo'\}.
\end{equation}
Combining \eqref{ineq1} with the inequality here above yields
\[
\begin{aligned}
\beta_0\|B_\gamma \mathbf{r}\|^2&>\epsilon\|\mathbf{r}\|^2-\mathbf{r}^* \cS'_2\mathbf{r}\\
&\geq \epsilon\|\mathbf{r}\|^2.
\end{aligned}
\]
which is the key inequality \eqref{key1a}. Lemma \ref{keylemma} implies that the unique solution of the linear system \eqref{facto1} (or \eqref{ampliec0}) is the trivial one $A_1=A_2=0$. That is, the secular equation does not vanish along $\{\Im c\geq0\}^*$. 
\end{proof}

\begin{remark}
Notice that the last theorem does not hold for $\varepsilon_j=\Re\gamma_j=0$, $j=1,2$, because  Inequality \eqref{ineq1}, which was crucial for the proof, ceases to be  valid as $\beta_0$ becomes undefined when $\Re\gamma_1=\Re\gamma_2=0$. In other words, Theorem \ref{nonva1} does not hold for the impedance boundary condition; whence the need of considering the PBC \eqref{bouny0} first.
Within the framework of the elasticity theory, a suface wave of Rayleigh type is often associated to real solution of the secular equation. Hence, according to  Theorem  \ref{nonva1}, one might infer in particular that  surface waves of Rayleigh type are impossible under the PBC.  However, there might be roots with $\Im c<0$ that could be associated with surface waves, which, strangely enough would present exponential decay over time due to the negative sign of the imaginary part of the root. 
 Although scarce, some atypical theoretical findings regarding Rayleigh waves can be found in the material science literature. For instance, Kuznetsov \cite{Kuzn03,Kuzn02} showed that it is theoretically possible for some anisotropic elastic materials to exhibit properties of non-existence of the genuine Rayleigh waves or surface waves of non-Rayleigh type. Nevertheless, the author points out that the question whether there actually exist that kind of waves remains open. In the hyperbolic PDEs literature, Theorem \ref{nonva1} is related to the  uniform Kreiss-Lopatinski\u{\i} condition which  implies the well-possedness of the boundary value problem \eqref{sys2or}-\eqref{bouny} (see \cite{Morando05,BS}).

\end{remark}


To observe the behavior of the secular equation for some particular values of the parameters $\gamma_1,\gamma_2$, consider an elastic half-space with density $\rho=1$, Young's modulus $E=1.86$  and Poisson's ratio $\nu=0.16$ (that is, $\mu=0.8$ and $\lambda=0.4$). Figure \ref{char1} shows some plots of the norm of the Rayleigh function $\cR(c,\gamma_1,\gamma_2)$  as a function of $c$ on the positive real axis. Observe that as the negative  parameters $\gamma_1,\gamma_2$ get closer to zero, a root of  $|\cR|$ on the interval $(0,c_2)$ start to appear. This corresponds to the velocity of the well-known Rayleigh wave (when $\gamma_1=\gamma_2=0$). Moreover, Figure \ref{fig99}   shows plots of the same function with the same parameter values, but with $c$ lying in the upper complex half-plane $\Im c>0$. It is easy to verify that there are no roots in that region.

\begin{figure}[h]
\begin{center}
\includegraphics[scale=.8, clip=true]{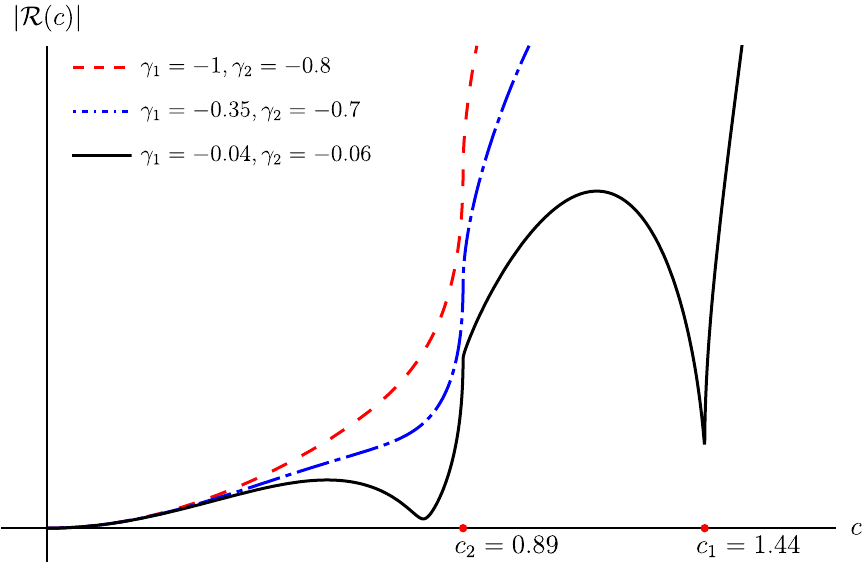}
\end{center}
\caption{Plots of $|\cR(c;\gamma_1,\gamma_2)|$ as a function of $c\in[0,\infty)$ for $\rho=1$, $E=1.86$, $\nu=0.16$ and some negative values of the boundary parameters $\gamma_1,\gamma_2$. (Color online)}\label{char1}
\end{figure}


\begin{figure}
\begin{center}
\subfigure[ $\gamma_1 = -0.35$, $\gamma_2 = -0.7$]{\label{figa}\includegraphics[scale=.5, clip=true]{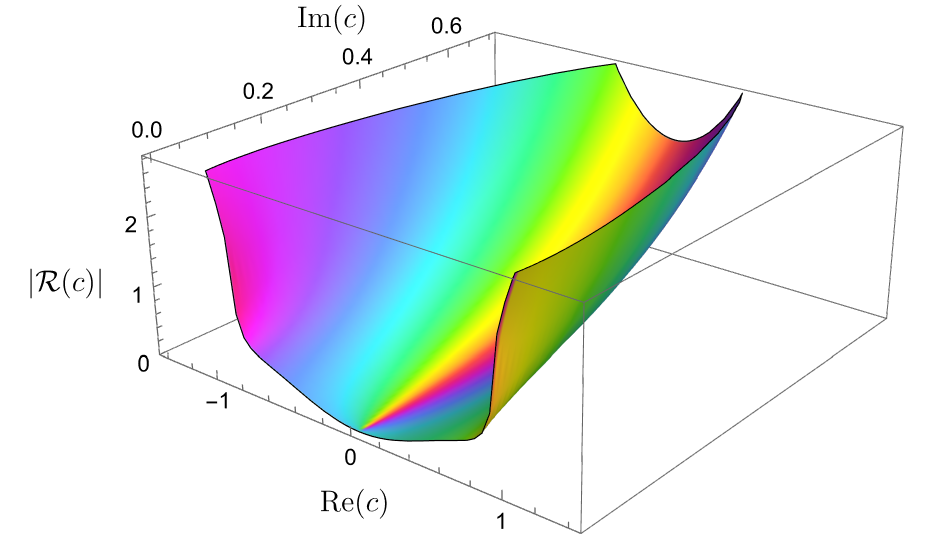}
\includegraphics[scale=.8, clip=true]{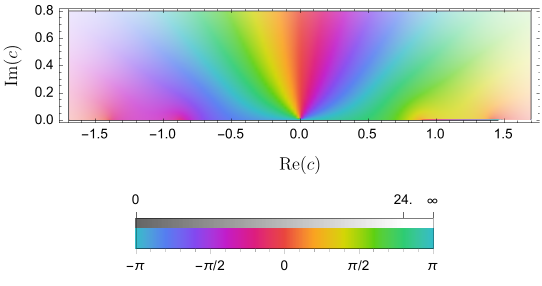}}
\subfigure[ $\gamma_1 = -0.04$, $\gamma_2 = -0.06$]{\label{figb}\includegraphics[scale=.5, clip=true]{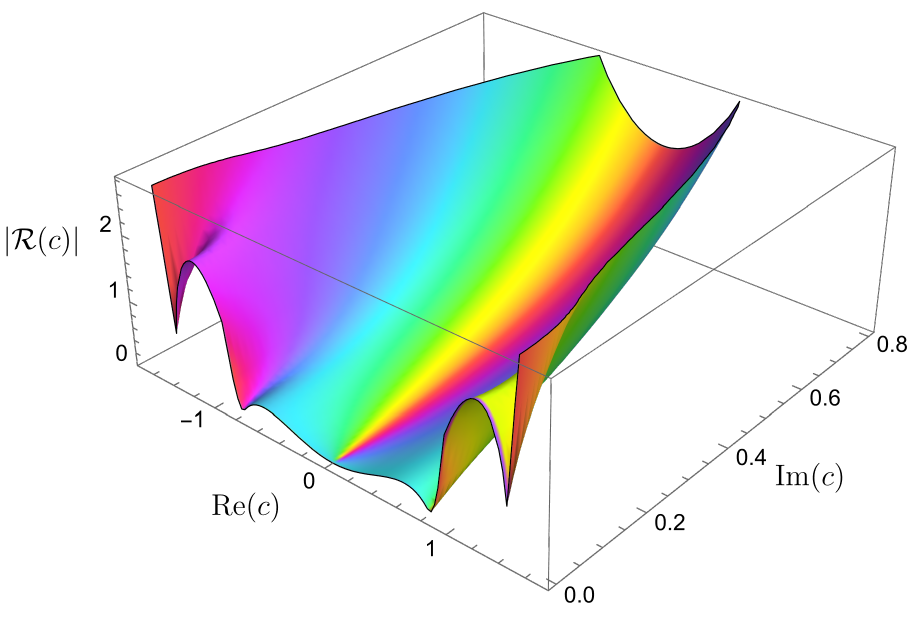}
\includegraphics[scale=.8, clip=true]{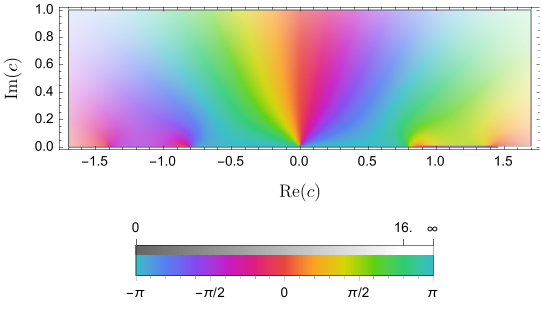}}
\end{center}
\caption{Complex plot (in 3D, left, and contour, right) of the norm of Rayleigh wave function \eqref{finsec12}  as function of $c\in \{\Im c\geq0\}$, with elastic parameters values, $E = 1.86$, $\nu=0.16$ and  boundary parameter  values $\gamma_1 = -0.35$, $\gamma_2 = -0.7$ (panel (a)) and $\gamma_1 = -0.04$, $\gamma_2 = -0.06$ (panel (b)). The color mapping legend shows the modulus $|\cR| \in (0,\infty)$ from dark to light tones of color and the phase from light blue ($\text{arg}(c) = -\pi$) to green ($\text{arg}(c) = \pi$). (Color online) }\label{fig99}
\end{figure}

\newpage
\section{The secular equation for  impedance boundary conditions}
\label{imped678}
In this section we let $\varepsilon_1,\varepsilon_2$ go to zero in \eqref{finsec12} and investigate if the non-vanishing property demonstrated in Theorem \ref{nonva1} remains valid in the limit. That is, the secular equation  for surface waves with impedance boundary conditions \eqref{finsec50}. We proceed in a classical way, approximating  \eqref{finsec50} by a sequence of secular equations with PBC for which Theorem \eqref{nonva1} guarantees the non-vanishing property along $\{\Im c\geq0\}^*$. When taking the limit in the sequence, we shall see that the non-vanishing property remains valid only on the open complex half-plane $\{\Im c>0\}$, not along the real axis. 
 Let us define
\begin{equation}\label{finsec5}
\begin{split}
f(c):&=\cR(c;Z_1\ii,Z_2\ii)\\&=\left(2-\dfrac{c^2}{c_2^2}\right)^2 -4\sqrt{1-\dfrac{c^2}{c_2^2}}\sqrt{1-\dfrac{c^2}{c_1^2}}+\dfrac{c^3}{\mu c_2^2}\left(Z_1\sqrt{1-\dfrac{c^2}{c_2^2}}+Z_2 \sqrt{1-\dfrac{ c^2}{c_1^2}}\right)\\ &+c^2\frac{Z_1 Z_2}{\mu^2}\left(\sqrt{1-\dfrac{c^2}{c_2^2}}\sqrt{1-\dfrac{c^2}{c_1^2}}-1\right).
\end{split}
\end{equation}
We can  write  \eqref{finsec5} as the limit of a sequence of functions of the form $\cR$ in \eqref{finsec12} (associated to the PBC) by setting the boundary parameters $\gamma_j(n):=-\frac{1}{n}+Z_j\ii$, $j=1,2$ (that is, $\varepsilon_j(n):=-\frac{1}{n}$) for each $n\in\Z^+$. That is, 
\begin{equation}\label{limit1}
f(c)=\cR\big(c;Z_1\ii,Z_2\ii\big)=\lim_{n\to\infty}\cR\Big(c;\gamma_1(n),\gamma_2(n)\Big).
\end{equation}
Since we trivially have 
\[
\Re\big(\gamma_j(n)\big)=-\tfrac{1}{n}<0,\; j=1,2,
\]
Theorem \ref{nonva1} implies that each element of the sequence 
\begin{equation}\label{seque67}
f_n(c):=\cR\big(c;\gamma_1(n),\gamma_2(n)\big),\;n\in\Z^+
\end{equation}
does not vanish on $\{\Im c\geq0\}^*$ for each $n\in \Z^+$. One would infer that this non-vanishing property remains valid as $n$ goes to infinity, but it is widely known that properties such as continuity or integrability of a given  sequence of functions may be lost at infinity  (see, e.g., \cite{vallejo2010}). However, there is a classical result from complex analysis that guarantees the non-vanishing property of the limit function $f=\cR$, but only on the upper complex half-plane, not along the real axis. This is  Hurwitz's theorem (see  \cite{ahlfors79}).
\begin{theo}[\emph{Hurwitz's theorem}]
\label{Hurw}
Let $\Omega\in\C$  an open connected set and suppose the sequence of analytic functions $f_n:\Omega\to\C$, $n\in\Z^+$ converges   to $f$ uniformly on every compact subset of $\Omega$. If each  $f_n$ never vanishes on $\Omega$, then either $f$ is identically zero or $f$ never vanishes on $\Omega$.
\end{theo}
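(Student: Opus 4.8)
The plan is to establish this as the classical \emph{Hurwitz theorem}, using two ingredients: (i) a locally uniform limit of analytic functions is analytic, and (ii) the argument principle (equivalently, Rouch\'e's theorem). First I would note that since $f_n\to f$ uniformly on every compact subset of $\Omega$, Morera's theorem --- passing the limit through $\oint_{\partial T}f_n=0$ on triangles $T\subset\Omega$ --- shows that $f$ is analytic on $\Omega$; and the Cauchy integral formula for derivatives, applied on slightly shrunken disks, gives that $f_n'\to f'$ also uniformly on every compact subset of $\Omega$.

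Assume now that $f$ is not identically $0$, and suppose for contradiction that $f(z_0)=0$ for some $z_0\in\Omega$. Since $\Omega$ is connected and $f\not\equiv 0$, the zeros of $f$ are isolated, so I would choose $r>0$ with $\overline{D}(z_0,r)\subset\Omega$ such that $f$ has no zero on the circle $C=\{|z-z_0|=r\}$ (and $z_0$ is its only zero in the disk). Put $\delta:=\min_{z\in C}|f(z)|>0$, which is attained and positive because $C$ is compact and $f$ is zero-free there. On $C$ we have $f_n\to f$ and $f_n'\to f'$ uniformly, and since $|f|\ge\delta$ on $C$, for all large $n$ also $|f_n|\ge\delta/2$ on $C$; hence $f_n'/f_n\to f'/f$ uniformly on $C$, so
\[
0=\lim_{n\to\infty}\frac{1}{2\pi\ii}\oint_{C}\frac{f_n'(z)}{f_n(z)}\,dz=\frac{1}{2\pi\ii}\oint_{C}\frac{f'(z)}{f(z)}\,dz,
\]
where the integrals on the left vanish because each $f_n$ is zero-free on $\Omega$. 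By the argument principle the right-hand integral equals the number of zeros of $f$ inside $C$ counted with multiplicity, hence it is at least $1$ since $f(z_0)=0$ --- a contradiction. Therefore $f$ never vanishes on $\Omega$, which is the desired dichotomy. (One may instead invoke Rouch\'e's theorem: for $n$ large, $|f_n-f|<\delta\le|f|$ on $C$, so $f_n$ and $f$ have the same number of zeros inside $C$, again forcing $f$ to have none.)

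The only slightly delicate point is the passage to the limit inside $\oint_{C}f_n'/f_n$: this needs both $f_n'\to f'$ uniformly on $C$ (from Cauchy's estimates on a disk slightly larger than $\overline{D}(z_0,r)$ but still contained in $\Omega$) and the uniform lower bound $|f_n|\ge\delta/2$ on $C$ (from $|f|\ge\delta$ together with uniform convergence on $C$). The remaining ingredients --- analyticity of the limit $f$, isolation of the zeros of a nonzero analytic function on a connected domain, and the argument principle --- are standard, so I do not anticipate further obstacles.
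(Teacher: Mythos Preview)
Your proof is correct and is exactly the standard argument-principle proof of Hurwitz's theorem. Note, however, that the paper does not supply its own proof of this statement: it is quoted as a classical result from complex analysis with a reference to Ahlfors' textbook, and then applied as a black box. So there is no ``paper's proof'' to compare against; your write-up simply fills in what the paper leaves to the cited reference, and it does so cleanly.
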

A compact set in $\C$ (or $\R^2$) is a bounded set containing its boundary points. This theorem has been applied for instance to count zeros of holomorphic (or analytic) functions inside of open connected sets, see for instance \cite{asmar18}.  We shall apply Hurwitz's theorem to the sequence of functions $f_n$, $n\in\Z^+$ defined in \eqref{seque67}.
Since the theorem requires that the sequence be defined on an open connected set, we assume that each $f_n$ in \eqref{seque67} is defined on $\Omega=\{c\in\C:\Im c>0\}$, which meets the condition. Recall that it was assumed that each of the square roots in \eqref{finsec5} is the principal branch, hence they are  holomorphic on the entire complex plane except for the   branch cut located along the real axis. As a result, $f_n$ is in particular holomorphic on $\Omega=\{c\in\C:\Im>0\}$. It remains to verify the uniform convergence  of the sequence  $\{f_n\}$ on any compact set contained in $\Omega$. This is accomplished in the following lemma.
\begin{lemma}\label{convnormal}
For all fixed pair $Z_1, Z_2\in\R$, the sequence of functions $f_n$ defined in \eqref{seque67} converges uniformly to $f(c)=\cR\big(c;Z_1\ii,Z_2\ii\big)$ on any compact set $K\subset\Omega$.
\end{lemma}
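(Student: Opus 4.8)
The plan is to exploit the fact that, as a function of the boundary parameters $\gamma_1,\gamma_2$, the Rayleigh function $\cR(c;\gamma_1,\gamma_2)$ in \eqref{finsec12} is a polynomial — affine in each of $\gamma_1,\gamma_2$ separately and bilinear in the pair — whose coefficients are functions of $c$ that are holomorphic on $\Omega=\{\Im c>0\}$. Explicitly, I would write
\[
\cR(c;\gamma_1,\gamma_2)=\cR_0(c)+\gamma_1 P(c)+\gamma_2 Q(c)+\gamma_1\gamma_2\, R(c),
\]
where $\cR_0(c)$ is the part independent of $\gamma_1,\gamma_2$ (the classical stress-free Rayleigh function), $P(c)=-\tfrac{c^3\ii}{\mu c_2^2}\sqrt{1-c^2/c_2^2}$, $Q(c)=-\tfrac{c^3\ii}{\mu c_2^2}\sqrt{1-c^2/c_1^2}$ and $R(c)=\tfrac{c^2}{\mu^2}\big(1-\sqrt{1-c^2/c_2^2}\sqrt{1-c^2/c_1^2}\big)$. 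Since each principal-branch square root is holomorphic off its branch cut, which lies on the real axis, the functions $P,Q,R$ are holomorphic — in particular continuous — on $\Omega$.

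Next I would insert $\gamma_j(n)=-\tfrac1n+Z_j\ii$ and subtract. Using $\gamma_j(n)-Z_j\ii=-\tfrac1n$ and $\gamma_1(n)\gamma_2(n)-(Z_1\ii)(Z_2\ii)=\tfrac1{n^2}-\tfrac{(Z_1+Z_2)\ii}{n}$, the terms $\cR_0$ cancel and what remains is
\[
f_n(c)-f(c)=-\tfrac1n\big(P(c)+Q(c)\big)+\Big(\tfrac1{n^2}-\tfrac{(Z_1+Z_2)\ii}{n}\Big)R(c),
\]
whence $|f_n(c)-f(c)|\le \tfrac1n\big(|P(c)|+|Q(c)|\big)+\big(\tfrac1{n^2}+\tfrac{|Z_1+Z_2|}{n}\big)|R(c)|$ for every $c$ and every $n$.

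Finally I would fix a compact set $K\subset\Omega$. Being closed, bounded and disjoint from the real axis, $K$ is a set on which the continuous functions $|P|,|Q|,|R|$ attain finite maxima $M_P,M_Q,M_R$, and these bounds are independent of $c$. Taking the supremum over $c\in K$ in the pointwise estimate gives
\[
\sup_{c\in K}\big|f_n(c)-f(c)\big|\le \frac{M_P+M_Q}{n}+\Big(\frac1{n^2}+\frac{|Z_1+Z_2|}{n}\Big)M_R,
\]
and the right-hand side tends to $0$ as $n\to\infty$, which is exactly uniform convergence of $f_n$ to $f$ on $K$. There is no genuine obstacle in this argument; the only point deserving care is the boundedness of $P,Q,R$ on $K$, and this is precisely where one uses that $K$ stays away from the real axis, so that the branch cuts of the square roots never intervene. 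Everything else is the explicit affine-in-$\gamma$ structure of $\cR$ together with $\gamma_j(n)\to Z_j\ii$.
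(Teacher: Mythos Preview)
Your proof is correct and follows essentially the same approach as the paper: both decompose $\cR(c;\gamma_1,\gamma_2)$ as an affine/bilinear expression in $\gamma_1,\gamma_2$ with holomorphic coefficients, bound those coefficients on the compact set $K$, and conclude via the explicit $1/n$ decay of the perturbation. The only cosmetic difference is that you expand $\gamma_1(n)\gamma_2(n)-(Z_1\ii)(Z_2\ii)$ explicitly as $\tfrac{1}{n^2}-\tfrac{(Z_1+Z_2)\ii}{n}$, whereas the paper leaves this term unexpanded; either way the bound tends to zero.
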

\begin{proof}
Let $K$ be an arbitrary compact subset of $\Omega$. To prove the uniform convergence of $f_n$ to $f$ on $K$, we have to show that (see \cite{1henrici98}, section \S 2.1, Chapter 1): 
\begin{equation}\label{unif1}
\lim\limits_{n\to\infty}\sup_{c\in K}|f_n(c)-f(c)|=0.
\end{equation}
To show this, let us define the functions
$$g_1(c):=-\dfrac{c^3\ii}{\mu c_2^2}\sqrt{1-\dfrac{c^2}{c_2^2}},\quad g_2(\tau):=-\dfrac{c^3\ii}{\mu c_2^2} \sqrt{1-\dfrac{ c^2}{c_1^2}},\quad g_3(c):=\frac{c^2}{\mu^2}\left(1-\sqrt{1-\dfrac{c^2}{c_2^2}}\sqrt{1-\dfrac{c^2}{c_1^2}}\right).$$
The Rayleigh function in \eqref{finsec12} can be written as
\[
\cR(c;\gamma_1,\gamma_2)=\left(2-\dfrac{c^2}{c_2^2}\right)^2 -4\sqrt{1-\dfrac{c^2}{c_2^2}}\sqrt{1-\dfrac{c^2}{c_1^2}}+\gamma_1 g_1(c)+\gamma_2g_2(c)+\gamma_1\gamma_2g_3(c).
\]
Thus, as $f_n$  and $f$  are defined in terms of the function $\cR$ (see, \eqref{seque67} and \eqref{finsec5}),  straighforward calculation gives
\begin{equation}\label{ineqp9}
\begin{split}
|f_n(c)-f(c)|&=\big|-\tfrac{1}{n}g_1(c)-\tfrac{1}{n}g_2(c)+\big((-\tfrac{1}{n}+Z_1\ii)(-\tfrac{1}{n}+Z_2\ii)-(Z_1\ii)(Z_2\ii)\big)g_3(c)\big|\\
&\leq \tfrac{1}{n}|g_1(c)|+\tfrac{1}{n}|g_2(c)|+\big|(-\tfrac{1}{n}+Z_1\ii)(-\tfrac{1}{n}+Z_2\ii)-(Z_1\ii)(Z_2\ii)\big| |g_3(c)|.
\end{split}
\end{equation}
Note that $g_1,g_2,g_3$ are holomorphic in $\Omega$. Hence taking the complex norm $|g_1|,|g_2|,|g_3|$,  produces real valued continuous functions defined on the compact set $K\subset\Omega$. Each continuous real valued function on a compact set of $\C=\R^2$ attains a maximum on that set. Thus, there are positive constants $m_1,m_2,m_3$ such that
\[
|g_1(c)|\leq m_1,\quad|g_2(c)|\leq m_2, \quad|g_3(c)|\leq m_3,
\] 
for all $c\in K$. Using this fact into \eqref{ineqp9} gives
\begin{equation}\label{supm45}
|f_n(c)-f(c)|\leq \tfrac{1}{n}m_1+\tfrac{1}{n}m_2+\big|(-\tfrac{1}{n}+Z_1\ii)(-\tfrac{1}{n}+Z_2\ii)-(Z_1\ii)(Z_2\ii)\big| m_3,
\end{equation}
for all $c\in K$. Note that the right hand side from the  above, namely,
$$r_n:=\tfrac{1}{n}m_1+\tfrac{1}{n}m_2+\big|(-\tfrac{1}{n}+Z_1\ii)(-\tfrac{1}{n}+Z_2\ii)-(Z_1\ii)(Z_2\ii)\big| m_3,$$
is an upper bound (independent of $c$) for the values of $|f_n(c)-f(c)|$ on $K$. Thus, since the supremum is the least upper bound, we trivially have from \eqref{supm45} that
\[
0\leq\sup_{c\in K}|f_n(c)-f(c)|\leq  r_n.
\]
Given that  $r_n$ tends to  zero as  $n$ goes to $\infty$, we obtain  \eqref{unif1} from the inequality hereabove by applying the squeeze theorem for sequences. That is, the convergence $f_n\to f$ is uniform on $K$. Since $K$ is an arbitrary compact set in $\Omega$, we have proof that the convergence is uniform on any compact set of $\Omega$. 
\end{proof}

 Now we can proof the principal result
\begin{theo}\label{godosecu1}
Let $\lambda,\mu$ as in \eqref{lame44} and $Z_1,Z_2$ be real constants. The secular equation with impedance boundary condition (see  \eqref{finsec5})
\begin{equation}\label{finfin}
f(c)=\cR(c;Z_1\ii,Z_2\ii)=0,
\end{equation}
 has no roots  outside of the real axis. 
\end{theo}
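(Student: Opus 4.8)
The plan is to deduce Theorem~\ref{godosecu1} from Hurwitz's theorem (Theorem~\ref{Hurw}) applied to the sequence $f_n$ of \eqref{seque67}, using the uniform convergence of Lemma~\ref{convnormal} together with the non-vanishing of the perturbed secular equation established in Theorem~\ref{nonva1}, and then to pass from the open upper half-plane to all of $\C\setminus\R$ by means of the symmetry \eqref{unif89}.

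First I would fix $Z_1,Z_2\in\R$ and take $\Omega=\{c\in\C:\Im c>0\}$, which is open and connected. Each $f_n$ is holomorphic on $\Omega$ (the principal-branch square roots in \eqref{finsec5} are holomorphic off the real axis); by Lemma~\ref{convnormal} the sequence converges to $f$ uniformly on every compact subset of $\Omega$; and, because $\Re\gamma_j(n)=-1/n<0$, Theorem~\ref{nonva1} shows that no $f_n$ vanishes on $\{\Im c\geq0\}^*\supset\Omega$. Hurwitz's theorem then yields the dichotomy: either $f\equiv 0$ on $\Omega$, or $f$ has no zeros in $\Omega$.

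To discard the first alternative I would examine the growth of $f$ at infinity. Along the ray $c=\ii t$, $t>0$, the radicals in \eqref{finsec5} reduce to the positive reals $\sqrt{1+t^{2}/c_j^{2}}\sim t/c_j$ as $t\to+\infty$, and a short computation gives
\begin{equation*}
f(\ii t)=\kappa\,t^{4}+O(t^{2}),\qquad \kappa:=\frac{1}{c_2^{4}}-\frac{\ii}{\mu c_2^{2}}\Big(\frac{Z_1}{c_2}+\frac{Z_2}{c_1}\Big)-\frac{Z_1Z_2}{\mu^{2}c_1c_2}.
\end{equation*}
If $\Im\kappa=0$ then $Z_1/c_2=-Z_2/c_1$, hence $Z_1Z_2\le 0$, so $\Re\kappa\ge 1/c_2^{4}>0$; thus $\kappa\neq 0$ for every pair $(Z_1,Z_2)\in\R^{2}$ (here $c_1,c_2>0$ by \eqref{lame44}). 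Consequently $|f(\ii t)|\to\infty$, so $f\not\equiv0$, and Hurwitz's theorem forces $f$ to be zero-free on $\{\Im c>0\}$. Since $Z_1,Z_2$ were arbitrary reals, this holds for every choice of the impedance parameters.

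Finally, for the lower half-plane I would invoke Remark~\ref{upper1}: if $c_0$ with $\Im c_0<0$ satisfied $f(c_0)=\cR(c_0;Z_1\ii,Z_2\ii)=0$, then \eqref{unif89} would make $-c_0$ (which lies in $\{\Im c>0\}$) a zero of $\cR(\cdot;-Z_1\ii,-Z_2\ii)$, contradicting the previous paragraph applied to the real pair $(-Z_1,-Z_2)$. Hence $f$ has no zeros off the real axis. The only computational input beyond assembling earlier results is the asymptotics of $f$ and the sign check $\kappa\neq0$; the main point to be careful about is that Hurwitz is used on the connected set $\Omega$ alone, so the conclusion on $\{\Im c<0\}$ genuinely requires the symmetry argument and cannot be obtained by analytic continuation across the branch cut on $\R$.
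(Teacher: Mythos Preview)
Your proof is correct and follows essentially the same route as the paper: uniform convergence on compacta (Lemma~\ref{convnormal}) plus the non-vanishing of each $f_n$ from Theorem~\ref{nonva1} feed into Hurwitz on $\Omega=\{\Im c>0\}$, and the lower half-plane is handled via the symmetry~\eqref{unif89}. Your asymptotic check that $\kappa\neq0$ along $c=\ii t$ is a genuine improvement over the paper's proof, which asserts $f\not\equiv0$ without justification.
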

\begin{proof}
According to Lemma \ref{convnormal}, the sequence \eqref{seque67} converges uniformly on any compact set of the open connected set $\Omega=\{c\in\C:\Im c>0\}$. Since the perturbed parameters $\gamma_j(n)=-\frac{1}{n}+\ii Z_j$, $j=1,2$ have negative real parts, Theorem \eqref{nonva1} ensures that each holomorphic function $f_n$ never vanishes on $\Omega$ so Hurwitz's theorem applies and then the limit function $f(c)=\cR(c;Z_1\ii,Z_2\ii)$ never vanishes on $\Omega$ provided it is not identically zero for all $Z_j\in\R$, $j=1,2$. 
Moreover, the symmetric property \eqref{unif89} implies that the secular equation neither has roots along the open lower complex half-plane $\{\Im c<0\}$.
\end{proof}

\begin{remark}
Theorem \ref{godosecu1} is consistent with the results obtained in the literature for particular cases of the full impedance boundary conditions under consideration. In the stress-free case ($Z_1=Z_2=0$) and the tangential case ($Z_1\in\R, Z_2=0$), the associated secular equation does not have roots off the real axis, and there is always a unique real root on the interval $(0, c_2)$.
 The case of normal impedance boundary conditions ($Z_1\in\R, Z_2=0$) is quite interesting since it was demonstrated  (see Theorems 1 and 3 in \cite{Pham21}) that the existence of a surface wave of Rayleigh type  is lost when the impedance parameter $Z_2$ decreases  beyond of a critical value. This  suggest that the non-existence of surface waves or even the existence of solutions of infinite energy could occur when considering the general case with both non-zero impedance parameters.  Theorem \ref{godosecu1} at least rules out the second scenario.  
   In the context of initial boundary value problems of first order hyperbolic systems of PDEs defined on the half-space, this non-vanishing property of the secular equation in the upper complex half-plane is equivalent to the (weak) Kreiss-Lopatinski\u{\i} condition \cite{BS,BRSZ,Kre70,Hig01,serr2}, a necessary condition for the well-posedness of boundary value problems of the form \eqref{sys2or}-\eqref{prefacto}.

\end{remark}

\section{Discussion}
\label{secdisc}
In this paper, we have presented an alternative method for addressing the secular equation of surface waves propagating in an elastic isotropic half-space subjected to boundary conditions of  impedance type. We first consider a boundary condition (PBC) that can be viewed as a perturbed version of the impedance boundary condition proposed by Godoy et al. \cite{Godoy1}.  The method consist in studying the associated secular equation  indirectly through the algebraic homogeneous linear system, whose determinant gives rise the secular equation itself.
 This alternative approach becomes feasible thanks to both, the symmetric first order version of the isotropic equations and the representation of the algebraic  homogeneous linear system (whose determinant is the secular equation)  as a constant matrix (related to the boundary condition) acting over a linear space spanned by two vectors associated to the surface normal modes.  In this fashion,  Inequality \eqref{ineq1} reduces the analysis to  determine the sign of a quadratic form to conclude that the secular equation (associated to the PBC) does not vanish neither along the upper complex half-plane nor the real axis (except for the trivial root $c=0$). 
 By the use of a classic approximation technique,  we aim to extend this non-vanishing property to the secular equation with full impedance boundary condition, finally proving that it does not have complex root outside the real axis for arbitrary impedance parameters. This is a necessary condition for the well-posedness of the boundary value problem, and thus crucial for the model to explain surface  wave propagation.  Since the behavior of the secular equation off the real axis has been established, simpler techniques can be used to analyze it along the real axis in order to demonstrate the existence and uniqueness of a surface wave. An appealing  example in this direction is the approach implemented by Godoy et al. \cite{Godoy1}, which, though intricate, relies on elements of basic calculus. 
Moreover, the fact that the secular equation with full impedance boundary condition can only  have real roots might indicate further simplifications when applying the complex function method to derive an exact formula  for the surface wave speed, in cases where it exists.
 It is worth noting that our approach seems to be extendable to the study of impedance boundary conditions for some anisotropic elastic solids. The reason is that one of the key ingredients, namely a symmetric first order versions of the equation of motion, is possible due to the quadratic form of the strain-energy function (see, \cite{Morando05,aki20}). Conversely,  another crucial element, namely Inequality \eqref{ineq1}  might be challenging to derive for the anisotropic case in the same way as in the proof of Theorem \eqref{maint11}. However, its existence might follow by certain results within the theory of hyperbolic systems of PDEs or advanced linear algebra, provided that it is essentially a quadratic form.

\section*{Acknowledgements}
The author is warmly grateful to Ram\'on G. Plaza  for many stimulating conversations. I also thank Federico J. Sabina, whose comments and suggestions improve the manuscript. 
 This work was supported by the National Science and Technology Council (CONAHCyT) of M\'exico under grant CF-2023-G-122.

\bibliographystyle{abbrvnat}
\bibliography{bibliography}

\end{document}